\let\orgdescriptionlabel\descriptionlabel
\renewcommand*{\descriptionlabel}[1]{
  \let\orglabel\label
  \let\label\@gobble
  \phantomsection
  \edef\@currentlabel{#1}
  \let\label\orglabel
  \orgdescriptionlabel{#1}
}
\providecommand{\anonalt}[2]{#1}
\newcommand{\solverLink}{\anonalt{\url{https://github.com/HamidrezaKmK/PhysarumSDPSolver}}{URL removed for double blindness}}
\newcommand{\dynamicOne}{first ansatz}
\newcommand{\dynamicTwo}{second ansatz}
\newcommand{\dynamicOneCap}{First Ansatz}
\newcommand{\dynamicTwoCap}{Second Ansatz}
\newcommand{\dynamicLP}{LP Physarum dynamic}
\newcommand{\dynamicSDP}{SDP Physarum dynamic}
\newcommand{\dynamicSDPs}{SDP Physarum dynamics}
\newcommand{\dynamicSDPCaps}{SDP Physarum Dynamics}
\providecommand{\vectorize}[1]{\underrightarrow{#1}}
\providecommand{\capitalize}[1]{\mathcal{#1}}
\newtheorem*{rep@theorem}{\rep@title}
\newcommand{\newreptheorem}[2]{
\newenvironment{rep#1}[1]{
 \def\rep@title{#2 \ref{##1}}
 \begin{rep@theorem}}
 {\end{rep@theorem}}}
\newtheorem{theorem}{Theorem}[section]
\newtheorem{corollary}[theorem]{Corollary}
\newtheorem{lemma}[theorem]{Lemma}
\newtheorem{conjecture}[theorem]{Conjecture}
\newtheorem{remark}[theorem]{Remark}
\def\norm#1{\left\| #1 \right\|}
\title{Physarum Inspired Dynamics to Solve Semi-Definite Programs}
\author{
    \anonalt{
    \anonalt{Yuan Gao\thanks{\noindent Max Planck Institute for Informatics}}{Anonymous author}
    \\
	\anonalt{\texttt{yuanagao@mpi-inf.mpg.de}}{anonymous email address}
	\and 
	\anonalt{Hamidreza Kamkari\thanks{\noindent Sharif University of Technology}}{Anonymous author } \\
	\anonalt{\texttt{hamidrezakamkari@gmail.com}}{anonymous email address}
	\and
	\anonalt{Andreas Karrenbauer\footnotemark[1]}{Anonymous author}\\
	\anonalt{\texttt{karrenba@mpi-inf.mpg.de}}{anonymous email address}
	\and
	\anonalt{Kurt Mehlhorn\footnotemark[1]}{Anonymous author}\\
	\anonalt{\texttt{mehlhorn@mpi-inf.mpg.de}}{anonymous email address}
	\and
	\anonalt{Mohammadamin Sharifi\footnotemark[2]}{Anonymous author}\\
	\anonalt{\texttt{sharifim689@gmail.com}}{anonymous email address}
	}{Anonymous Authors}
}
\date{}
\begin{document}

\maketitle
\thispagestyle{empty}
\begin{abstract}
	Physarum Polycephalum is a slime mold that can solve shortest path problems. 
	A mathematical model based on Physarum's behavior, known as the Physarum Directed Dynamics, 
	can solve positive linear programs. In this paper, we present a family of Physarum-based 
	dynamics extending the previous work and introduce a new algorithm to solve \emph{positive} 
	Semi-Definite Programs (SDP).  The Physarum dynamics are governed by orthogonal projections (w.r.t.\ time-dependent scalar products) on the affine subspace defined by the linear constraints.
	We present a natural generalization of the scalar products used in the LP case to the matrix space for SDPs, which boils down to the linear case when all matrices in the SDP are diagonal, thus, representing an LP.
	We investigate the behavior of the induced dynamics theoretically and experimentally, highlight challenges arising from the non-commutative nature of matrix products, and prove soundness and convergence under mild conditions.
	Moreover, we consider a more abstract view on the dynamics that suggests a slight variation to guarantee unconditional soundness and convergence-to-optimality.
	By simulating these dynamics using suitable discretizations, one obtains numerical algorithms for solving positive SDPs, which have applications in discrete optimization, e.g., for computing the Goemans-Williamson approximation for MaxCut or the Lov\'asz theta number for determining the clique/chromatic number in perfect graphs.
\end{abstract}
\clearpage

\pagenumbering{arabic}

\section{Introduction}

The Physarum computing model is an analog computing model motivated by the network
dynamics of the slime mold Physarum Polycephalum. In wet-lab experiments, it was observed that the slime mold is 
able to solve shortest path problems \cite{nakagaki2007minimum}. A mathematical model for the dynamic
behavior of the slime was proposed in \cite{tero2007mathematical}. 
Their slime network model is mathematically equivalent to an electrical network
with time-varying resistors that react to the amount of electrical current flowing through them.

A variant of the Physarum dynamics, the directed Physarum dynamics, is known to solve positive linear programs in
standard form \cite{johannson2012slime, straszak2015natural}. A positive linear program seeks to minimize a linear 
function $c^T x$ with a positive cost vector $c \in \mathbb{R}^n_{> 0}$
subject to the constraints $Ax = b$ and $x \ge 0$. Here $A \in \mathbb{R}^{m\times n} $and $b \in \mathbb{R}^{m}$. 

Since Semi-Definite Programming (SDP) problems generalize linear programs, it is natural to ask for what subclass of SDPs this dynamics can be adapted to. Before we turn to SDPs, we briefly review the dynamics for LPs.

The Physarum dynamics can be described by an autonomous dynamical system, i.e., its evolution over time is governed by a system of ordinary differential equations of the form
\[
\dot{x}(t) = v(x(t))
\]
where the \emph{velocities} $v(\cdot)$ do not explicitly depend on $t$.
For sake of of presentation, we may simply write $x$ instead of $x(t)$. Moreover, we first consider the special case when $Ax=b$ is satisfied at any time. 
The Physarum dynamics for LPs can then be written as~\cite{johannson2012slime, straszak2015natural}:
\begin{equation}
	\label{eq:physarum-dynamics-lp}
	\dot{x} = G(A^T (AGA^T)^\dagger Ax - c),
\end{equation}
where $G=C^{-1} X$ is the \textbf{Conductance} while $C$ and $X$ are the diagonal matrices
defined by the objective coefficient $c$ and the current point $x$, respectively. Equivalently, we also have:
\begin{equation}
	\label{eq:physarum-dynamics-lp-proj}
	\dot{x}(t)=-(I-GA^T (AGA^T)^\dagger A)w,
\end{equation}
where $w\coloneqq Gc$. Note that $(I-GA^T (AGA^T)^\dagger A)$ is an orthogonal projection 
on the kernel of $A$ under a weighted inner product.
It has been shown in~\cite{johannson2012slime, straszak2015natural} that the 
trajectory of the Physarum dynamic~\eqref{eq:physarum-dynamics-lp-proj} stays within the feasible 
region and converges to an optimal solution of the linear program. Numerous results are driven by 
altering the dynamics or taking a more abstract view~\cite{bonifaci2022physarum,bonifaci2017revised, 
KARRENBAUER2020260, StraszakV22}. 
These results motivate the research into generalizing 
the \dynamicLP{} to other more powerful convex optimization problems. In this paper, we 
define and analyze a family of Physarum inspired dynamics for positive SDP. 

\paragraph{Positive SDP} In general, an SDP can be written as one of the primal-dual pair
\begin{equation}\label{eq:SDP-definition}
\begin{split}
& \min \{ tr( C^T X ) : tr( A_{\ell}^T X ) = b_\ell \forall \ell \in [m],\, X \succeq 0 \} \\
\ge &\max \{ \sum_{\ell = 1}^m b_\ell y_\ell : \sum_{\ell = 1}^m y_\ell A_{\ell} + S = C, \, S \succeq 0 \},
\end{split}
\end{equation} 
where we may assume without loss of generality that $C$ and all $A_{\ell}$ are symmetric. 
This is due to the fact that $X$ is symmetric and each formula of form $tr(MX)$ is equivalent to 
$tr(\frac{M + M^T}{2} X)$; furthermore, we rewrite all of these conditions in this form 
where $C$ will be replaced by $\frac{C + C^T}{2}$ and each $A_{\ell}$ matrices will 
be replaced by $\frac{A_\ell + A_\ell^T}{2}$. We will call an SDP \textbf{positive} 
if the cost matrix $C$ is positive definite. 
Note that in this case, strong duality always holds because Slater's condition is trivially satisfied for the dual (maximization) problem as $y=0$, $S=C \succ 0$ is a solution in the relative interior.
Moreover, if $C$ and all the matrices $A_\ell$ are diagonal, 
then it suffices to only consider diagonal solutions $X$ and the SDP reduces to an LP. In this case, $X \succeq 0$ is equivalent to each of the diagonal elements 
being non-negative; which is the same as $diag(X) \ge 0$.
The matrix trace function then reduces to a dot product of the diagonal vectors.
Hence, the Physarum dynamics extends to positive SDPs restricted to diagonal matrices. But does it also extend to a more substantial subclass? Since the generalization of the conductances leaves some degrees of freedom, we will present a \dynamicOne{} and the subclass of positive SDPs for which we are able to prove soundness and convergence. Moreover, we report on computational experiments that suggest that this ansatz could actually work for all positive SDPs. However, to get there we make a \dynamicTwo{} that allows us to prove soundness and convergence for all positive SDPs.

Positive SDP is an important subclass of SDP, e.g., they are general enough for the Goemans-Williamson algorithm 
for approximating MaxCut~\cite{GW1995} or for computing the Lov\'asz theta number to determine the clique/chromatic number in a perfect graph~\cite{lovasz1979shannon,kleinberg1998lovasz}. This might not be obvious at a first glance but there are equivalent formulations with positive definite $C$ in both cases because a suitable multiple of the identity can be added to the cost matrix, which only causes a shift in the objective value since the diagonal of $X$ is fixed to all ones. In fact, whenever the dual SDP has a non-empty relative interior and one can find a witness with reasonable effort, one can replace the cost matrix $C$ with the corresponding positive definite dual slack matrix, which turns the SDP into a positive one even when $C$ was indefinite or negative definite (as for the two examples mentioned above).

\paragraph{Related Work} 
The study of Physarum dynamics were started in the mathematical biology community, in the context
of shortest path~\cite{nakagaki2007minimum}. Its theoretical foundations in computing has 
been studied in a line of works~\cite{Bonifaci2012shortest, Becchetti2013, miyaji2008physarum}.
Other works extended the Physarum dynamics to graph and network problem beyond shortest path,
including the design of transportation networks~\cite{tero2010rules, watanabe2011traffic, yang2019bio},
supply-chain networks~\cite{zhang2017physarum}, and flow problems~\cite{straszak2016natural, bonifaci2022physarum}.
Going beyond the graph setting, \cite{johannson2012slime, straszak2015natural} showed that the directed 
Physarum dynamics can solve positive linear programs in standard form. Physarum dynamics are also 
studied in other more abstract optimization problems, including the basis pursuit problem~\cite{facca2018physarum,StraszakV22}.

Many works benefit from modifying the Physarum dynamics or studying them in an abstract setting.
In~\cite{bonifaci2017revised} the author revised the model of~\cite{tero2007mathematical} by considering a different controlling variable for the adaptation mechanism, and gave a generalized Physarum dynamic which subsumes several Physarum dynamics as special cases, including~\cite{KARRENBAUER2020260} and the two-norm dynamics~\cite{bonifaci2022physarum}. In~\cite{KARRENBAUER2020260} they analyzed the convergence of ``non-uniform'' variant of the Physarum dynamics. In~\cite{bonifaci2022physarum} two different Physarum dynamics were proposed and analyzed for the multi-commodity flow problem.
By studying a Meta-Algorithm, \cite{StraszakV22} established a connection between the Iteratively Reweighted Least
Squares (IRLS) algorithm and the Physarum dynamics.

\paragraph{Our contribution}
In this paper, we generalize the \dynamicLP{} to SDP, by presenting 
a general recipe for the \dynamicSDP. We show that while being general enough,
our framework already guarantees some very desirable properties.
We propose a natural generalization of the LP Physarum conductance matrix and hence obtain 
our \dynamicOne. Key technical challenges are identified in the analysis
of \dynamicOne's soundness and convergence to equilibrium, under mild 
conditions. We present strong empirical evidence for the dynamic's unconditional convergence 
to optimality. With our understanding of the technical challenges posed by 
the \dynamicOne, and following our general recipe, we propose a
second conductance matrix and hence obtain our \dynamicTwo. 
We prove unconditional guarantee on the \dynamicTwo's soundness,
and more importantly, convergence to optimality. We believe that this showcases 
the potential and power of our general framework for \dynamicSDP. Although our implementation was geared towards the investigation of soundness and convergence instead of optimizing for performance, we are convinced that numerical algorithms obtained from carefully discretizing the Physarum dynamics motivate further research into practical SDP solvers based on nature-inspired algorithms.

\paragraph{Organization of the paper} 
In section~\ref{sec:overview} we give an overview of our contributions and explain 
the general framework and some of the key results. In section~\ref{sec:general-recipe}
we investigate our general recipe of \dynamicSDP{} and prove its desirable 
properties. Then in section~\ref{sec:1-dynamic} we propose a natural generalization of 
the \dynamicLP's conductance matrix and hence give our \dynamicOne.
In section~\ref{sec:1-soundness} and~\ref{sec:1-convergence} we analyse the soundness and 
covnergence properties of the \dynamicOne, under mild conditions.
Section~\ref{sec:2-dynamic} proposes the \dynamicTwo{} and proves 
unconditional soundness and convergence-to-optimality results.
We also highlight key differences between the two Physarum dynamics, and the technical
advantages of the \dynamicTwo{} which enables us to prove the strong 
convergence results. Section~\ref{sec:algo-experiment} presents two numerical algorithms 
obtained from discretizing the \dynamicSDP. In section~\ref{sec:experiment}
we discuss the experimental evaluations of the two numerical algorithms, providing 
empirical evidence for the unconditional soundness and convergence of 
the \dynamicOne. We provide practical implementations 
of the solvers and demonstrate their applications in various discrete optimization problems.

\section{Overview}\label{sec:overview}

In this section we give an overview of the \dynamicSDPs{} that are 
studied in this paper. The \dynamicSDPs{} are naturally inspired by 
previous works in \dynamicLP. These dynamics are charactarized
by the choices of conductance matrices in the system of ordinary differential equations describing the evolution, and we 
give a concrete choice of the conductance, inspired by and closely related to the \dynamicLP. However, the previous tehniques
do not work out of the box, due to the noncommutative nature of matrix multiplication. We show 
that the \dynamicSDP{} is sound and convergent, under mild conditions
on the objective. We then move on to show that a slight variation on the choice of the 
conductance matrix enables us to prove the soundness and convergence-to-optimality of the \dynamicSDP{} without these mild assumptions. 

For an easier comparison with the LP case note that one can also write an SDP in a vectorized notation where 
$\vectorize{M}$ for an $n \times n$ matrix $M$ is an $n^2$ vector whose $n(i-1) + j$th element is $M_{i,j}$, 
i.e., it is obtained by stacking up the columns of $M$ on top of each other.
Using this notation, we may write $tr(A^T B) = \sum_{i,j} a_{ij} b_{ij} = \vectorize{A}^T \vectorize{B}$.
Consider $\capitalize{A}$ to be an $m \times n^2$ matrix defined as

\begin{equation} \label{eq:omega-definition}
\capitalize{A} = \begin{bmatrix}
\vectorize{A_1}^T \\ \vectorize{A_2}^T \\ ... \\ \vectorize{A_m}^T
\end{bmatrix}.
\end{equation}

Now we can also write SDP in the vectorized format, which resembles that of the LP case:

\begin{equation*}
\min \{\vectorize{C}^T \vectorize{X} ~ : ~ \capitalize{A} \cdot \vectorize{X} = b,~ X \succeq 0\} =  \max \{ b^T y : \capitalize{A}^T y + \vectorize{S} = \vectorize{C}, \, S \succeq 0 \} \> \text{for }C \succ 0 
\end{equation*}
where we use the convention $\min \emptyset = \infty$ in the case of primal infeasibility, which is equivalent to dual unboundedness because $C \succ 0$ implies dual feasibility.

\paragraph{A general recipe for \dynamicSDPs} 
A natural way to extend the \dynamicLP~\eqref{eq:physarum-dynamics-lp-proj} to 
the SDP case is to lift the differential equations to matrices. For simplicity, here we 
assume that the \dynamicSDP{} start with a feasible point $X(0)$. 
In particular, we say that a point $X$ is \textbf{linearly feasible} if it satisfies
all linear constraints.
See section~\ref{sec:1-infeasible} for a detailed explanation on 
how to accommodate infeasible starting points. Our \dynamicSDP{} is defined 
by the following differential equation:
\begin{equation}
    \label{eq:physarum-dynamics-sdp-general}
    \vectorize{\dot{X}} = - (I - G\capitalize{A}^T(\capitalize{A}G\capitalize{A}^T)^\dagger \capitalize{A})
        G\vectorize{C},
\end{equation}
where, again, $G$ is some \textbf{conductance} matrix that characterizes the \dynamicSDPs.
While the dynamic itself is straightforward to generalize, the choice of conductance is not as obvious. 
A trivial generalization of the conductance of \dynamicLP{} would be the $n^2\times n^2$
matrix with the diagonal stacking all entries of $C^{-1}X$ while the other entries zero. This, 
however, will not work because this choice of conductance directly yields the \dynamicLP{} for the problem
\[
	\min \left \{ \vectorize{C}^T\vectorize{X}:\capitalize{A}\vectorize{X}=b, 
		\vectorize{X}\geq 0 \right \},
\]
which does not agree with the SDP problem because a positive semi-definite matrix may have negative entries. Moreover, for $\vectorize{C}$, $\vectorize{X}(0)$ with non-negative entries, the dynamics with that conductance matrix will keep all entries of $\vectorize{X}$ non-negative, and for negative entries, convergence would not be guaranteed at all.

On the other hand, using the sole requirement $G \succeq 0$ without any further specification of its structure, 
our general recipe can already be shown to reduce the objective values monotonically over time.
Writing the objective value along the trajectory of the dynamic as a function over time as
\[
    \mathcal{L}(t) \coloneqq tr(CX(t)) = \vectorize{C}^T \vectorize{X}(t),
\]
we obtain the following Theorem.

\begin{theorem}
    \label{thm:dynamic-reduce-objective}
    Using any symmetric positive semi-definite conductance $G$ for the \dynamicSDP~\eqref{eq:physarum-dynamics-sdp-general}, 
	then the following is true:
    \begin{enumerate}
        \item $\frac{d}{dt}\mathcal{L}(t)\leq 0$
        \item $\frac{d}{dt}\mathcal{L}(t)$ becomes zero if and only if $\dot{X}(t)=0$
        \item If $\mathcal{L}(t)$ is bounded from below, the dynamic converges to equilibrium, i.e., $\lim_{t\to \infty}\norm{\dot{X}(t)}=0$
    \end{enumerate}
\end{theorem}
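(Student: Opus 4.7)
The plan is to differentiate $\mathcal{L}(t) = \vectorize{C}^T \vectorize{X}(t)$ along the trajectory and reduce the result, via the symmetric square root $G^{1/2}$ (which exists because $G \succeq 0$), to a squared norm of a standard orthogonal projection, so that the sign and the zero set both become immediate. Substituting~\eqref{eq:physarum-dynamics-sdp-general} gives
\[
\dot{\mathcal{L}}(t) = -\vectorize{C}^T \bigl(I - G\capitalize{A}^T(\capitalize{A}G\capitalize{A}^T)^\dagger \capitalize{A}\bigr) G\,\vectorize{C}.
\]
Setting $M := G^{1/2}\capitalize{A}^T$ and $z := G^{1/2}\vectorize{C}$, one has $\capitalize{A}G\capitalize{A}^T = M^T M$, and the middle matrix simplifies to $G^{1/2} \Pi_M G^{1/2}$, where $\Pi_M := M(M^T M)^\dagger M^T$ is the ordinary orthogonal projection onto the range of $M$. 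Consequently
\[
\dot{\mathcal{L}}(t) = -z^T(I - \Pi_M)z = -\|(I - \Pi_M)z\|^2, \qquad \vectorize{\dot{X}} = -G^{1/2}(I - \Pi_M) z.
\]

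Parts 1 and 2 then fall out of this rewriting. Nonpositivity of $\dot{\mathcal{L}}$ is just nonnegativity of a squared norm. For the equivalence in Part 2, the direction $\vectorize{\dot{X}}=0 \Rightarrow \dot{\mathcal{L}}=0$ is immediate from $\dot{\mathcal{L}} = \vectorize{C}^T \vectorize{\dot{X}}$; conversely, $\dot{\mathcal{L}}=0$ forces $(I - \Pi_M)z = 0$ via the squared-norm formula, whence $\vectorize{\dot{X}} = -G^{1/2}\cdot 0 = 0$.

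For Part 3, monotone decrease together with the lower bound yields a finite limit $\mathcal{L}(t) \to L^{*}$, and hence
\[
\int_0^\infty \|(I - \Pi_M(t))\,z(t)\|^2\,dt \;=\; \mathcal{L}(0) - L^{*} \;<\; \infty.
\]
The main obstacle, and the part I would spend the most care on, is upgrading this $L^1$ control on the dissipation rate to the pointwise statement $\|\vectorize{\dot{X}}(t)\| \to 0$. My plan is a Barbalat-style argument: boundedness of $\mathcal{L}$ together with the positivity structure ($C \succ 0$ and $X \succeq 0$) bounds $tr(X(t))$ and hence $X(t)$ itself, on which $G$ and therefore the right-hand side of the ODE are Lipschitz, making $\|\vectorize{\dot{X}}(t)\|^2$ uniformly continuous; its integrability then forces it to vanish at infinity, and the inequality $\|\vectorize{\dot{X}}\|^2 \le \lambda_{\max}(G)\,\|(I-\Pi_M)z\|^2$ transfers the limit back to $\|\vectorize{\dot{X}}(t)\| \to 0$. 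The delicate point is that the statement only assumes $G \succeq 0$ abstractly, so this last step implicitly requires regularity (continuity of $G$ in $X$ and a trajectory bound) coming from the specific conductance models, rather than from the general recipe alone.
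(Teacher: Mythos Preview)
Your approach is essentially the same as the paper's for Parts 1 and 2. Both factor through $G^{1/2}$ and identify $P_G \coloneqq G^{1/2}\capitalize{A}^T(\capitalize{A}G\capitalize{A}^T)^\dagger\capitalize{A}G^{1/2}$ as an orthogonal projection; your $\Pi_M$ is exactly this $P_G$. The paper phrases Part~1 via $\|P_G z\| \le \|z\|$ while you write the derivative directly as $-\|(I-\Pi_M)z\|^2$, and for Part~2 the paper does a case split on the eigenvalue ($P_G z = 0$ versus $P_G z = z$) where you go straight through $(I-\Pi_M)z = 0 \Rightarrow \vectorize{\dot{X}} = -G^{1/2}\cdot 0 = 0$. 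These are cosmetic differences; your version is arguably cleaner.

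For Part~3 you are actually more careful than the paper. The paper's entire proof of Part~3 is the sentence ``$\mathcal{L}$ is decreasing and bounded below, hence $\dot{X}$ must eventually converge to $0$,'' i.e., it asserts the conclusion without the Barbalat step you correctly flag as the main obstacle. Your observation that integrability of the dissipation rate does not by itself force pointwise convergence without some uniform continuity, and that this in turn needs regularity of $G$ in $X$ and a trajectory bound coming from the concrete conductance models rather than the abstract recipe, is a genuine point that the paper glosses over.
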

We restate and prove this theorem as Theorem~\ref{thm:dynamic-reduce-objective-restate}.
This theorem shows that while our general recipe is general enough to accomodate
essentially any positive semi-definite conductance matrix, it still offers 
guarantees on how the objective values behave along the dynamics' trajectories.

\paragraph{\dynamicOneCap} Following the general recipe, we give a natural generalization 
of the conductance matrix employed in the \dynamicLP~\eqref{eq:physarum-dynamics-lp-proj}, and 
hence get our first \dynamicSDP, which we call the \emph{\dynamicOne}. We also make a statement about having infeasible starting point, see section~\ref{sec:1-infeasible} for how we slightly modify the dynamic to cope with infeasible starting point.
Under mild assumption on the objective $C$ of the SDP, we obtain the following results on the 
\dynamicOne:
\begin{theorem}{(Informal)}
    \label{thm:1-sdp-dynamic-informal}
    For the class of positive SDPs where $C^{-1}$ is linearly feasible,
    our \dynamicOne{} $G\coloneqq \frac{1}{2}(C^{-1} \otimes X + X\otimes C^{-1} )$ for the conductance in~\eqref{eq:physarum-dynamics-sdp-general} satisfies the following:
    \begin{enumerate}
        \item the dynamic stays linearly feasible,
        \item the dynamic is sound, i.e., $X(T)\succ 0$ for any finite time $T\geq 0$.
        \item the dynamic reduces the objective and reaches equilibrium, in particular,
            $\frac{d}{dt}tr(CX(t))\leq 0$ where equality holds if only if $\dot{X}(t)=0$. Moreover, 
            $\lim_{t\to \infty}\norm{\dot{X}(t)}=0$.
    \end{enumerate}
\end{theorem}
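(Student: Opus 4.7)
The plan is to exploit two algebraic identities that fall out of the Kronecker formula $(A \otimes B)\vectorize{M} = \vectorize{BMA^T}$ applied to $G = \tfrac{1}{2}(C^{-1}\otimes X + X\otimes C^{-1})$:
\begin{align*}
G\vectorize{C} &= \tfrac{1}{2}\vectorize{XCC^{-1} + C^{-1}CX} = \vectorize{X}, \\
G\vectorize{X^{-1}} &= \tfrac{1}{2}\vectorize{XX^{-1}C^{-1} + C^{-1}X^{-1}X} = \vectorize{C^{-1}}.
\end{align*}
Because $C^{-1} \succ 0$ and $X(t) \succ 0$, each Kronecker summand is positive semi-definite; consequently $G$ is symmetric and PSD, which puts us under the hypotheses of Theorem~\ref{thm:dynamic-reduce-objective} and, by a short check, also shows that $G$ maps vectorized symmetric matrices to vectorized symmetric matrices, so $X(t)$ stays symmetric along the flow.

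For linear feasibility I would substitute $G\vectorize{C} = \vectorize{X}$ into~\eqref{eq:physarum-dynamics-sdp-general} to get $\vectorize{\dot{X}} = -(I - G\capitalize{A}^T(\capitalize{A}G\capitalize{A}^T)^\dagger\capitalize{A})\vectorize{X}$, and then verify $\capitalize{A}\vectorize{\dot{X}}=0$ using the pseudoinverse identity $\capitalize{A}G\capitalize{A}^T(\capitalize{A}G\capitalize{A}^T)^\dagger v = v$ for $v \in \mathrm{range}(\capitalize{A}G\capitalize{A}^T)$. The relevant membership is $\capitalize{A}\vectorize{X} = \capitalize{A}G\vectorize{C} \in \mathrm{range}(\capitalize{A}G) \subseteq \mathrm{range}(\capitalize{A}G^{1/2}) = \mathrm{range}(\capitalize{A}G\capitalize{A}^T)$, where the last equality uses the standard fact $\mathrm{range}(MM^T) = \mathrm{range}(M)$. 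Hence $\capitalize{A}\vectorize{X}(t) = b$ is preserved for all $t \ge 0$.

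The main obstacle is soundness: non-commutativity of matrix multiplication blocks the entry-wise argument used in the LP case, so I would instead track the Lyapunov-like potential $\Phi(t) := \log\det X(t)$, whose derivative is $\dot{\Phi}(t) = \vectorize{X^{-1}}^T\vectorize{\dot{X}}$. Using the equivalent form $\vectorize{\dot{X}} = G\capitalize{A}^T(\capitalize{A}G\capitalize{A}^T)^\dagger\capitalize{A}\vectorize{X} - G\vectorize{C}$, the identity $\vectorize{X^{-1}}^T G = (G\vectorize{X^{-1}})^T = \vectorize{C^{-1}}^T$ (via symmetry of $G$), the preserved constraint $\capitalize{A}\vectorize{X}=b$, and \emph{crucially} the assumption $\capitalize{A}\vectorize{C^{-1}}=b$ that $C^{-1}$ is linearly feasible, this collapses to
\begin{equation*}
\dot{\Phi}(t) = b^T(\capitalize{A}G\capitalize{A}^T)^\dagger b - n \ge -n,
\end{equation*}
since $(\capitalize{A}G\capitalize{A}^T)^\dagger$ is positive semi-definite and $\vectorize{C^{-1}}^T\vectorize{C} = tr(C^{-1}C) = n$. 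Thus $\log\det X(t) \ge \log\det X(0) - nt$ stays finite on every bounded interval, so $X(t)$ is non-singular there; continuity of eigenvalues together with $X(0) \succ 0$ then prevents any eigenvalue from crossing zero, giving $X(T) \succ 0$ for every finite $T \ge 0$.

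Part 3 is an immediate appeal to Theorem~\ref{thm:dynamic-reduce-objective} now that $G \succeq 0$ has been established throughout the trajectory: its first two conclusions give $\frac{d}{dt}tr(CX(t)) \le 0$ with equality iff $\dot{X}(t)=0$, and because $tr(CX(t)) \ge 0$ is bounded from below (both $C$ and $X(t)$ are positive definite), its third conclusion yields $\lim_{t\to\infty}\norm{\dot{X}(t)} = 0$.
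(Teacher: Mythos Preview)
Your proposal is correct and follows essentially the same approach as the paper: the paper likewise proves linear feasibility by showing $\capitalize{A}\vectorize{\dot X}=0$ via a pseudoinverse/range argument (Lemma~\ref{lem:xdot-in-kernel} and Theorem~\ref{thm:x-stays-feasible}), proves soundness by differentiating the barrier $\ln\det X(t)$ and using the same two Kronecker identities together with the linear feasibility of $C^{-1}$ to obtain the lower bound $-n$ (Theorem~\ref{thm:1-dynamic-pd-cone} via Lemma~\ref{lem:pd-work-function-diff}), and deduces part~3 directly from Theorem~\ref{thm:dynamic-reduce-objective}. The only cosmetic difference is that the paper states the kernel lemma for arbitrary PSD $G$ using an orthogonal decomposition of $G^{1/2}\vectorize{C}$, whereas you specialize via $G\vectorize{C}=\vectorize{X}$ and invoke $\mathrm{range}(MM^T)=\mathrm{range}(M)$; these are equivalent formulations of the same projection fact.
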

If the equilibrium points are positive definite, then statement 3 implies that the  \dynamicOne{} converges to optimality. In section~\ref{sec:experiment} we also provide empirical 
evidences that the dynamic converges to optimality without such assumption.
In section~\ref{sec:augmentation}
we introduce the augmented SDP such that the linear feasibility of $C^{-1}$ can be satisfied.
The first statement of Theorem~\ref{thm:1-sdp-dynamic-informal} is formally stated and proved as 
Theorem~\ref{thm:x-stays-feasible}. The second statement is formally stated and proved as Theorem~\ref{thm:1-dynamic-pd-cone}. The third statement is Corollary~\ref{cor:1-dynamic-convergence}
implied by Theorem~\ref{thm:dynamic-reduce-objective}.

\paragraph{\dynamicTwoCap} The soundness proof of 
Theorem~\ref{thm:1-sdp-dynamic-informal} relies on a mild assumption on the objective 
matrix $C$. Even though we can augment the problem so that the assumption is satisfied and 
that the conductance in the \dynamicOne{} is a natural extension of 
the conductance in~\eqref{eq:physarum-dynamics-lp-proj}, we study a different choice of 
conductance, namely $G\coloneqq X\otimes X$ and obtain our \emph{\dynamicTwo}.
With the \dynamicTwo, we are able to remove the mild assumption 
in~\ref{thm:1-sdp-dynamic-informal}. We summarize it as the following:

\begin{theorem}{(Informal)}
    \label{thm:2-sdp-dynamic-informal}
    For the class of positive SDPs, our \dynamicTwo{} $G \coloneqq X\otimes X$ for the conductance in~\eqref{eq:physarum-dynamics-sdp-general} satisfies the following:
    \begin{enumerate}
        \item the dynamic stays linearly feasible,
        \item the dynamic is sound, i.e., $X(T)\succ 0$ and for any finite time $T\geq 0$,
        \item the dynamic reduces the objective and reaches equilibrium, in particular,
        $\frac{d}{dt}tr(CX(t))\leq 0$ where equality holds if only if $\dot{X}(t)=0$. Moreover, 
        $\lim_{t\to \infty}\norm{\dot{X}(t)}=0$. 
    \end{enumerate}
\end{theorem}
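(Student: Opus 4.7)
The plan is to reduce the \dynamicTwo\ to the closed form $\dot{X} = -XSX$, where $S \coloneqq C - \sum_{\ell} y_\ell^{*} A_\ell$ and $y^{*} = M^\dagger\capitalize{A}(X\otimes X)\vectorize{C}$ with $M \coloneqq \capitalize{A}(X\otimes X)\capitalize{A}^T$. This follows from the Kronecker identity $(X\otimes X)\vectorize{M'} = \vectorize{XM'X}$ applied inside~\eqref{eq:physarum-dynamics-sdp-general}. Working on the maximal open interval $[0, T^{*})$ on which $X(t) \succ 0$ (non-empty given a strictly feasible start $X(0) \succ 0$), I would first establish Parts~1 and~3 using this closed form, and then devote the main effort to showing $T^{*} = \infty$, which is Part~2.

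For Part~1 on $[0, T^{*})$, the standard range argument suffices: since $X \otimes X \succ 0$, the range of $M$ equals the range of $\capitalize{A}$, and $\capitalize{A}(X\otimes X)\vectorize{C} = \capitalize{A}(X\otimes X)^{1/2}\cdot(X\otimes X)^{1/2}\vectorize{C}$ lies in this range. Hence $MM^\dagger$ acts as the identity on it, giving $\capitalize{A}\vectorize{\dot X} = 0$ and preserving $\capitalize{A}\vectorize{X(t)} = b$. Part~3 on $[0, T^{*})$ then follows from Theorem~\ref{thm:dynamic-reduce-objective} since $G = X\otimes X \succeq 0$ whenever $X \succeq 0$, together with $tr(CX) \geq 0$ (as $C \succ 0$ and $X \succeq 0$) supplying the lower bound needed for the convergence-to-equilibrium conclusion.

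For Part~2, I would exploit the Riemannian structure of the positive-definite cone. Equip $\mathbb{S}^n_{++}$ with the affine-invariant metric $\langle V, W \rangle_X \coloneqq tr(X^{-1} V X^{-1} W)$; it is classical that $(\mathbb{S}^n_{++}, \langle\cdot,\cdot\rangle)$ is geodesically complete and that the singular boundary lies at infinite Riemannian distance from every interior point. A direct calculation using the closed form yields
\begin{equation*}
\|\dot X\|_X^2 = tr\bigl(X^{-1}(XSX)X^{-1}(XSX)\bigr) = tr(SXSX) = \|X^{1/2}SX^{1/2}\|_F^2,
\end{equation*}
while specialising the dissipation computation behind Theorem~\ref{thm:dynamic-reduce-objective} to $G = X\otimes X$ gives $\frac{d}{dt}tr(CX) = -\|X^{1/2}SX^{1/2}\|_F^2$. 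Integrating,
\begin{equation*}
\int_0^{T^{*}}\|\dot X(t)\|_X^2\, dt \;\leq\; tr(CX(0)),
\end{equation*}
and Cauchy--Schwarz bounds the Riemannian arc-length of the trajectory by $\sqrt{T^{*}\cdot tr(CX(0))}$. If $T^{*}$ were finite, the trajectory would approach the singular boundary within finite Riemannian distance, contradicting geodesic completeness; hence $T^{*} = \infty$, proving Part~2 and validating Parts~1 and~3 globally.

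The main obstacle is Part~2. The closed form $\dot X = -XSX$, the feasibility identity, and the monotone dissipation of the objective are all clean precisely because $X \succ 0$, yet exactly what we must show is that positive-definiteness is never lost. The Riemannian viewpoint resolves this by converting any would-be finite-time escape of $\lambda_{\min}(X) \to 0$ into unbounded arc-length, which the dissipation identity forbids. A secondary point to verify carefully is that the Kronecker-product manipulations transfer faithfully between the matrix form $XSX$ and the vectorised form involving $(X\otimes X)\capitalize{A}^T$, since the non-commutativity of matrix products was the very difficulty that motivated moving from the \dynamicOne\ to $G = X\otimes X$; in particular, the symmetry of $XSX$ (from $S = S^T$ and $X = X^T$) is what keeps the flow in the symmetric matrices throughout.
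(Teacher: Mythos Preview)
Your proposal is correct, and Parts~1 and~3 match the paper's approach (Lemma~\ref{lem:xdot-in-kernel}/Theorem~\ref{thm:x-stays-feasible} and Theorem~\ref{thm:dynamic-reduce-objective} respectively). For Part~2, however, you take a genuinely different route. The paper argues via the barrier $W(t)=\ln\det X(t)$ directly: it shows $\tfrac{d}{dt}W(t)\ge -\sqrt{n}\,tr(CX(t))$ by Cauchy--Schwarz on $\vectorize{X^{-1}}^T G^{1/2}(I-P_G)G^{1/2}\vectorize{C}$, then uses the already-established monotonicity $tr(CX(t))\le tr(CX(0))$ to turn this into a time-independent lower bound and concludes by the standard ``barrier cannot diverge in finite time'' contradiction. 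You instead identify $\|\dot X\|_X^2$ in the affine-invariant metric with $-\tfrac{d}{dt}tr(CX)$ exactly (no $\sqrt{n}$ loss), integrate to bound the Riemannian $L^2$-energy by $tr(CX(0))$, pass to arc-length via Cauchy--Schwarz, and invoke geodesic completeness of $(\mathbb{S}^n_{++},\langle\cdot,\cdot\rangle_X)$. Your argument is more geometric and gives the sharper identity $\int_0^\infty\|\dot X\|_X^2\,dt\le tr(CX(0))$ for free; the paper's argument is more self-contained, requiring only elementary facts about $\ln\det$ rather than the (classical but non-trivial) completeness of the affine-invariant metric. Both are valid, and the two are closely related since the affine-invariant metric is the Hessian metric of $-\ln\det$.
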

Similar to before, if the equilibrium points are positive definite, 
then statement 3 implies that the \dynamicTwo{} converges to optimality.
The first statement of Theorem~\ref{thm:2-sdp-dynamic-informal} is formally stated and proved 
as Theorem~\ref{thm:x-stays-feasible}. The second statement is formally stated and proved as Theorem~\ref{thm:2-dynamic-pd-cone} and the third statement as Corollary~\ref{cor:2-dynamic-convergence}.
However, with the \dynamicTwo, we are able to avoid the technical obstacles emerging 
from analyzing the \dynamicOne, and prove results that are much stronger: we show that 
the \dynamicTwo{} coincide with a central path of SDP~\eqref{eq:SDP-definition}
when starting with a strictly feasible point, implying that the dynamic converges to optimality
unconditionally. We believe that this shows the great power and potential in our general recipe:
by cleverly designing the conductance, one can design Physarum dynamics that have strong 
theoretical guarantees. We state this result informally as the following:
\begin{theorem}{(Informal)}
	\label{thm:2-dynamic-central-path-informal}
	Fix any feasible point $F\succ 0$, there exists a central path of SDP~\eqref{eq:SDP-definition}
	that coincides with the \dynamicTwo{} starting from $F$.
	Consequently, the \dynamicTwo{} converges to the optimum of 
	SDP~\eqref{eq:SDP-definition} when starting from any feasible 
	point $X(0)\succ 0$.
\end{theorem}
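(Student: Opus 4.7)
The plan is to exploit the special structure of the conductance $G = X \otimes X$ to reduce the \dynamicTwo{} to a matrix ODE that admits a closed-form integral, and then to recognize this integrated relation as the KKT equation for the central path of a perturbed SDP whose analytic center is precisely $F$.

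First I would rewrite the dynamic in matrix form. Using the Kronecker identity $(X\otimes X)\vectorize{C} = \vectorize{XCX}$ and expanding the projection in~\eqref{eq:physarum-dynamics-sdp-general}, one obtains $\dot X = -XSX$ where $S(t) = C - \sum_{\ell=1}^{m} y_\ell(t)\,A_\ell$ and the multipliers $y(t)$ are determined by the linear system $\capitalize{A}(X\otimes X)\capitalize{A}^T y = \capitalize{A}(X\otimes X)\vectorize{C}$, equivalently $tr(A_\ell XSX)=0$ for each $\ell$. The crucial observation is then $\tfrac{d}{dt}X^{-1} = -X^{-1}\dot X X^{-1} = S$, so integrating gives the closed-form
$$X(t)^{-1} \;=\; F^{-1} + tC - \sum_{\ell=1}^{m} Y_\ell(t)\, A_\ell, \qquad Y_\ell(t) \;:=\; \int_0^t y_\ell(\tau)\,d\tau.$$
This is the main algebraic simplification enabled by the symmetric choice $G = X\otimes X$ and lets us sidestep the non-commutativity difficulties encountered for the \dynamicOne.

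Next I would introduce the barrier $\rho_F(X) := tr(F^{-1}X) - \log\det X$, which is strictly convex on the PD cone and whose gradient $F^{-1} - X^{-1}$ vanishes at $X = F$, so $F$ is the analytic center of the primal feasible set with respect to $\rho_F$. The associated parametric family
$$X^*(\mu) \;=\; \arg\min\bigl\{\, tr(CX) + \mu\,\rho_F(X) \;:\; \capitalize{A}\vectorize{X} = b,\; X\succ 0\,\bigr\}$$
satisfies the KKT stationarity $\mu X^{-1} = C + \mu F^{-1} - \sum_{\ell} y^*_\ell(\mu)\,A_\ell$, which is \emph{precisely} the integrated identity above after multiplying by $\mu = 1/t$ and identifying $y^*_\ell(\mu)$ with the time average $Y_\ell(t)/t$. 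Because the barrier is strictly convex, $X^*(\mu)$ is the unique primal-feasible matrix satisfying this stationarity condition, so the pair $(X(t), Y(t)/t)$ must coincide with $(X^*(1/t), y^*(1/t))$ on $(0,\infty)$, with boundary value $\lim_{\mu\to\infty}X^*(\mu) = F = X(0)$. Positive definiteness along the trajectory and well-posedness of the Lagrange system are inherited from Theorem~\ref{thm:2-sdp-dynamic-informal}.

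Convergence to the SDP optimum then follows from standard central-path theory: the linear perturbation $tr(F^{-1}X)$ does not affect self-concordance of $-\log\det X$, and since $C \succ 0$ the feasible set is bounded, so as $\mu \to 0$ the minimizer of $tr(CX) + \mu\rho_F(X)$ approaches an optimum of $tr(CX)$ over the feasible set; hence $X(t) = X^*(1/t) \to X^*_{\mathrm{opt}}$ as $t \to \infty$. The hardest step is the middle one: one has to \emph{guess} the correct, non-standard barrier $\rho_F$ that anchors the central path at the (generically non-central) starting point $F$, since the ordinary $-\log\det X$ central path does not pass through an arbitrary $F$. Once $\rho_F$ is identified, the coincidence of the two KKT equations is a single-line computation and the convergence conclusion follows cleanly.
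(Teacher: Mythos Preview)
Your proposal is correct and hinges on the same key ingredient as the paper: the modified barrier $\rho_F(X)=tr(F^{-1}X)-\ln\det X$, whose analytic center is $F$, together with the identity $\tfrac{d}{dt}X^{-1}=S$. The paper, however, runs the argument in the \emph{opposite direction}. It first \emph{posits} the central path
\[
X(t)=\arg\min\{\,t\cdot tr(CX)-\ln\det X+tr(F^{-1}X)\;:\;\capitalize{A}\vectorize{X}=b\,\},
\]
checks that $X(0)=F$, then differentiates the KKT relation $X^{-1}=tC+F^{-1}+\sum_\ell y_\ell A_\ell$ with respect to $t$ to obtain $-X^{-1}\dot X X^{-1}=C+\sum_\ell \dot y_\ell A_\ell$; eliminating $\dot y$ via $\capitalize{A}\vectorize{\dot X}=0$ recovers exactly the \dynamicTwo{} velocity. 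Your route instead starts from the ODE, integrates $\tfrac{d}{dt}X^{-1}=S$ to produce the KKT relation directly, and then invokes strict convexity of $\rho_F$ to identify the trajectory with the central path. The two arguments are mirror images: the paper differentiates a posited optimality condition, you integrate the dynamic and recognize the result. Your direction has the minor expository advantage that the barrier $\rho_F$ is \emph{discovered} rather than guessed (the term $tr(F^{-1}X)$ appears as the integration constant $X(0)^{-1}=F^{-1}$), whereas the paper's direction avoids the uniqueness-of-minimizer step since it only needs uniqueness of ODE solutions. Both rely on soundness (Theorem~\ref{thm:2-dynamic-pd-cone}) to justify working with $X^{-1}$ along the trajectory.
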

We discuss this result thoroughly in section~\ref{sec:2-convergence} and formally state and prove 
Theorem~\ref{thm:2-dynamic-central-path-informal} as Theorem~\ref{thm:2-dynamic-central-path}.

\section{A General Recipe for \dynamicSDPCaps}
\label{sec:general-recipe}
In section~\ref{sec:overview} we briefly explained that our general recipe for \dynamicSDP~\eqref{eq:physarum-dynamics-sdp-general} is a natural extension of the \dynamicLP{} to SDP. One might ask, why do we need a general framework of \dynamicSDPs{} that admit any choices of positive semidefinite conductance $G$? As it turns out, 
given any concrete choice of conductance matrix $G$ that is positive semi-definite,
the \dynamicSDP{} already satisfies two important properties:
starting with a linearly feasible $X(0)$,
\begin{enumerate}
	\item the dynamic stays linearly feasible.
	\item the objective value is reduced along the trajectory of the dynamic.
\end{enumerate}
This means that once a positive semidefinite conductance matrix $G$ is chosen such 
that the dynamic stays within the positive definite cone, an \dynamicSDP{} that 
stays feasible and reduces objective is found.

In this section, we prove these two important properties of the general recipe of 
SDP Phyarum dynamic~\eqref{eq:physarum-dynamics-sdp-general}.

\paragraph{Linear feasibility} Geometrically speaking, given any positive definite
conductance $G$, the matrix 
$-(I - G\capitalize{A}^T(\capitalize{A}G\capitalize{A}^T)^{-1} \capitalize{A})$ is the 
projection into the kernel of $\capitalize{A}$, under the inner product:
\[
    \langle a, b\rangle_{G^{-1} } \coloneqq a^T G^{-1}  b
\]
We will expand on this in section~\ref{sec:1-infeasible} and obtain an equivalent formulation 
of the \dynamicSDPs. Now, what this geometric perspective entails for us is that 
in our general recipe of the SDP Phsarum dynamics, $\vectorize{\dot{X}}(t)$ is always in the 
kernel of $\capitalize{A}$.
\begin{lemma}
    \label{lem:xdot-in-kernel}
    Given any positive semi-definite conductance $G$,
    $-(I - G\capitalize{A}^T(\capitalize{A}G\capitalize{A}^T)^\dagger \capitalize{A})G\vectorize{C}$ is in the 
    kernel of $\capitalize{A}$. In particular, given the \dynamicSDP~\eqref{eq:physarum-dynamics-sdp-general}, for any time $t$,
    \[
        \capitalize{A}\vectorize{\dot{X}}(t) = 0
    \]
\end{lemma}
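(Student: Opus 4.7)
The plan is to verify the claim by direct algebraic manipulation, with the only subtlety being a careful argument about the range of the pseudoinverse. First I would apply $\capitalize{A}$ to the given vector and distribute, obtaining
\[
  \capitalize{A}\bigl(I - G\capitalize{A}^T(\capitalize{A}G\capitalize{A}^T)^\dagger \capitalize{A}\bigr)G\vectorize{C}
  \;=\; \capitalize{A}G\vectorize{C} \;-\; M M^\dagger \bigl(\capitalize{A}G\vectorize{C}\bigr),
\]
where I set $M \coloneqq \capitalize{A}G\capitalize{A}^T$. So the whole claim reduces to showing $MM^\dagger y = y$ for $y \coloneqq \capitalize{A}G\vectorize{C}$.

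The defining property of the Moore--Penrose pseudoinverse gives that $MM^\dagger$ is the orthogonal projection onto the range of $M$, so $MM^\dagger y = y$ is equivalent to $y \in \operatorname{range}(M)$. This is the one nonobvious step, because $G$ is only assumed to be symmetric positive semi-definite, so $M$ need not be invertible. To handle this, I would introduce the symmetric square root $G^{1/2}$ of $G$ (well-defined since $G \succeq 0$) and write
\[
  M \;=\; \capitalize{A}G\capitalize{A}^T \;=\; (\capitalize{A}G^{1/2})(\capitalize{A}G^{1/2})^T,
\]
from which standard linear algebra gives $\operatorname{range}(M) = \operatorname{range}(\capitalize{A}G^{1/2})$. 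Since
\[
  y \;=\; \capitalize{A}G\vectorize{C} \;=\; (\capitalize{A}G^{1/2})\bigl(G^{1/2}\vectorize{C}\bigr)
\]
lies in $\operatorname{range}(\capitalize{A}G^{1/2}) = \operatorname{range}(M)$, we get $MM^\dagger y = y$ as required, and hence the full expression vanishes.

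The main obstacle, as anticipated, is justifying the ``$MM^\dagger y = y$'' step in the PSD (possibly singular) case; the factorization through $G^{1/2}$ is the cleanest way around it, and is the only place where $G \succeq 0$ is actually used. Once that is in place, the ``in particular'' clause is immediate: plugging into the dynamic~\eqref{eq:physarum-dynamics-sdp-general} one obtains $\capitalize{A}\vectorize{\dot X}(t) = 0$ for every $t$, establishing the lemma.
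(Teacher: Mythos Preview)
Your proof is correct and follows essentially the same approach as the paper: both introduce the symmetric square root $G^{1/2}$, set $\tilde{\capitalize{A}}=\capitalize{A}G^{1/2}$, and reduce the claim to the fact that $\capitalize{A}G\vectorize{C}=\tilde{\capitalize{A}}(G^{1/2}\vectorize{C})$ lies in $\operatorname{range}(\tilde{\capitalize{A}})=\operatorname{range}(\tilde{\capitalize{A}}\tilde{\capitalize{A}}^T)$. The only cosmetic difference is that the paper carries this out via an explicit orthogonal decomposition $G^{1/2}\vectorize{C}=\tilde{\capitalize{A}}^T v + d$ with $\tilde{\capitalize{A}}d=0$ and then uses $MM^\dagger M=M$, whereas you invoke directly that $MM^\dagger$ is the orthogonal projector onto $\operatorname{range}(M)$; the underlying idea is identical.
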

\begin{proof}
    Let $\Tilde{\capitalize{A}}\coloneqq \capitalize{A}G^\frac{1}{2}$, there exists $v$ and $d$ such that 
	\[
		G^{\frac{1}{2}}\vectorize{C} = \Tilde{\capitalize{A}}^Tv+d
	\]
	where $\Tilde{\capitalize{A}} d =0$. Now we can verify that:
    \begin{align*}
        \capitalize{A}\vectorize{\dot{X}}(t)&= -\capitalize{A}(I - G\capitalize{A}^T
			(\capitalize{A}G\capitalize{A}^T)^\dagger \capitalize{A})G\vectorize{C}\\
        &= -\Tilde{\capitalize{A}}G^\frac{1}{2}\vectorize{C} + \Tilde{\capitalize{A}}\Tilde{\capitalize{A}}^T
			(\Tilde{\capitalize{A}}\Tilde{\capitalize{A}}^T)^\dagger \Tilde{\capitalize{A}} 
			G^\frac{1}{2}\vectorize{C}\\
		&= -\Tilde{\capitalize{A}}\Tilde{\capitalize{A}}^Tv -\Tilde{\capitalize{A}}d
			+ \Tilde{\capitalize{A}}\Tilde{\capitalize{A}}^T
			(\Tilde{\capitalize{A}}\Tilde{\capitalize{A}}^T)^\dagger \Tilde{\capitalize{A}} 
			\Tilde{\capitalize{A}}^Tv + \Tilde{\capitalize{A}}\Tilde{\capitalize{A}}^T
			(\Tilde{\capitalize{A}}\Tilde{\capitalize{A}}^T)^\dagger \Tilde{\capitalize{A}} d\\
        &= -\Tilde{\capitalize{A}}\Tilde{\capitalize{A}}^Tv
			+\Tilde{\capitalize{A}}\Tilde{\capitalize{A}}^T
			(\Tilde{\capitalize{A}}\Tilde{\capitalize{A}}^T)^\dagger \Tilde{\capitalize{A}} 
			\Tilde{\capitalize{A}}^Tv\\
		&= 0
    \end{align*}
    So indeed $\vectorize{\dot{X}}(t)$ is in the kernel of $\capitalize{A}$. 
\end{proof}
This leads to our next theorem that shows $X(t)$ stays linearly feasible, if the \dynamicSDP{} 
starts with a linearly feasible $X(0)$.
\begin{theorem}
    \label{thm:x-stays-feasible}
    Given a linearly feasible starting point for the \dynamicSDPs~\eqref{eq:physarum-dynamics-sdp-general},
	and a positive semi-definite conductance $G$,
    then at any time $t$, 
    \[
        \capitalize{A}\vectorize{X}(t) = b
    \]
    In other words, $X(t)$ stays feasible.
\end{theorem}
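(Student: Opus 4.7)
The plan is to reduce the claim to a direct consequence of Lemma~\ref{lem:xdot-in-kernel}. Define the residual function
\[
    r(t) \coloneqq \capitalize{A}\vectorize{X}(t) - b,
\]
which lives in $\mathbb{R}^m$. The goal is to show $r(t) \equiv 0$.

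First I would evaluate $r(0)$. Since $X(0)$ is assumed to be linearly feasible, by definition this means $\capitalize{A}\vectorize{X}(0) = b$, so $r(0) = 0$. Next I would differentiate $r$ with respect to time, using the fact that $\capitalize{A}$ is a constant matrix and $b$ is constant, which yields
\[
    \dot{r}(t) = \capitalize{A}\vectorize{\dot{X}}(t).
\]
By the dynamics~\eqref{eq:physarum-dynamics-sdp-general} together with Lemma~\ref{lem:xdot-in-kernel}, the right-hand side is identically zero for all $t$. Hence $r$ is a constant function with $r(0) = 0$, so $r(t) = 0$ for all $t$, which is exactly the statement that $\capitalize{A}\vectorize{X}(t) = b$.

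There is essentially no obstacle here beyond invoking the previous lemma; the proof is a one-line integration of a vanishing derivative. The only hidden assumption worth flagging is that the trajectory $X(t)$ is well-defined and continuously differentiable on the time interval under consideration, but this is guaranteed by the fact that the right-hand side of~\eqref{eq:physarum-dynamics-sdp-general} depends smoothly on $X$ wherever $\capitalize{A}G\capitalize{A}^T$ has locally constant rank (so that the pseudoinverse is smooth), which is the standing assumption implicit in defining the dynamics. No further case analysis is needed.
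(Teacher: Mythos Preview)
Your proposal is correct and matches the paper's own proof essentially line for line: both define the residual $\capitalize{A}\vectorize{X}(t)-b$, observe it vanishes at $t=0$ by hypothesis, and conclude it is identically zero because its derivative $\capitalize{A}\vectorize{\dot{X}}(t)$ vanishes by Lemma~\ref{lem:xdot-in-kernel}. The only cosmetic difference is that the paper writes out the fundamental theorem of calculus explicitly as an integral, whereas you phrase it as ``$r$ is a constant function''.
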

\begin{proof}
    First note that:
    \[
        \frac{d}{dt} \capitalize{A}\vectorize{X}(t) - b = \capitalize{A}\vectorize{\dot{X}}(t)
    \]
    Since $X(0)$ is linearly feasible, $\capitalize{A}\vectorize{X}(0) - b=0$. By fundamental theorem of calculus, we have:
    \begin{align*}
        \capitalize{A}\vectorize{X}(t) - b & = \capitalize{A}\vectorize{X}(0) - b + \int_0^t\capitalize{A}\vectorize{\dot{X}}(s)ds\\
            &= 0
    \end{align*}
    where in the last equality we applied Lemma~\ref{lem:xdot-in-kernel}. Therefore,
    $X(t)$ stays feasible.
\end{proof}

\paragraph{Objective value along the dynamic's trajectory} Now we move on to show 
that with any positive semi-definite conductance matrix $G$, our general recipe 
of \dynamicSDP{} reduces the objective value. We restate
Theorem~\ref{thm:dynamic-reduce-objective} and give a formal proof:
\begin{theorem}
    \label{thm:dynamic-reduce-objective-restate}
    Using any symmetric positive semi-definite conductance $G$ for the \dynamicSDP~\eqref{eq:physarum-dynamics-sdp-general},  
	then the following is true:
    \begin{enumerate}
        \item $\frac{d}{dt}\mathcal{L}(t)\leq 0$
        \item $\frac{d}{dt}\mathcal{L}(t)$ becomes zero if and only if $\dot{X}(t)=0$
        \item If $\mathcal{L}(t)$ is bounded from below, the dynamic converges to equilibrium $\lim_{t\to \infty}\norm{\dot{X}(t)}=0$
    \end{enumerate}
\end{theorem}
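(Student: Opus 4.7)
The plan is to differentiate $\mathcal{L}$ directly along the dynamic~\eqref{eq:physarum-dynamics-sdp-general} and then read off all three statements from the algebraic structure of the resulting quadratic form. Writing $M \coloneqq \bigl(I - G\capitalize{A}^T(\capitalize{A}G\capitalize{A}^T)^\dagger \capitalize{A}\bigr)G$, we immediately obtain
\[
\dot{\mathcal{L}}(t) = \vectorize{C}^T\vectorize{\dot{X}}(t) = -\vectorize{C}^T M \vectorize{C},
\]
so parts 1 and 2 both come down to understanding $M$ as a quadratic form on $\vectorize{C}$.

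The key step I would carry out is the factorization $M = G^{1/2} Q G^{1/2}$, where $Q \coloneqq I - \tilde{\capitalize{A}}^T(\tilde{\capitalize{A}}\tilde{\capitalize{A}}^T)^\dagger \tilde{\capitalize{A}}$ and $\tilde{\capitalize{A}} \coloneqq \capitalize{A}G^{1/2}$, using exactly the substitution employed in the proof of Lemma~\ref{lem:xdot-in-kernel}. Because $Q$ is the orthogonal projector onto $\ker \tilde{\capitalize{A}}$, it satisfies $Q = Q^T = Q^2$; hence $M = (QG^{1/2})^T(QG^{1/2})$ is symmetric and positive semi-definite, which gives part 1 at once. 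For part 2 the same factorization yields $-\dot{\mathcal{L}}(t) = \|QG^{1/2}\vectorize{C}\|^2$, so $\dot{\mathcal{L}} = 0$ iff $QG^{1/2}\vectorize{C} = 0$. Multiplying on the left by $G^{1/2}$ and using $Q = Q^2$ gives $M\vectorize{C} = G^{1/2}QG^{1/2}\vectorize{C} = 0$, i.e.\ $\vectorize{\dot{X}} = 0$; the converse direction is immediate from the definition of $M$.

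For part 3 the plan is to combine integrability of $-\dot{\mathcal{L}}$ with an operator-norm comparison. Monotonicity and boundedness below of $\mathcal{L}$ yield $\int_0^\infty (-\dot{\mathcal{L}}(t))\,dt < \infty$. Writing $u \coloneqq QG^{1/2}\vectorize{C}$, the identity $\vectorize{\dot{X}} = -G^{1/2} u$ produces the bound
\[
\|\vectorize{\dot{X}}\|^2 \;=\; \|G^{1/2} u\|^2 \;\leq\; \|G^{1/2}\|_{\mathrm{op}}^2 \,\|u\|^2 \;=\; \|G^{1/2}\|_{\mathrm{op}}^2 \cdot (-\dot{\mathcal{L}}),
\]
so $\|\dot{X}\|^2 \in L^1([0,\infty))$ as soon as $\|G^{1/2}\|_{\mathrm{op}}$ stays bounded along the trajectory. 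To upgrade integrability to the pointwise statement $\|\dot{X}(t)\| \to 0$ I expect to invoke a Barbalat-style argument, which requires uniform continuity of $\|\dot{X}\|^2$ in $t$; this should in turn follow from smoothness of $G(X(\cdot))$ together with boundedness of $X(t)$.

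I expect this last regularity step — rather than the algebraic semidefiniteness of $M$ — to be the main obstacle, since the theorem is stated for an arbitrary PSD conductance and does not by itself guarantee the trajectory-level bounds needed to invoke Barbalat. The cleanest workaround, which I would fall back on if the uniform continuity proves fragile, is to interpret the result through LaSalle's invariance principle applied to the Lyapunov function $\mathcal{L}$, using parts 1 and 2 to identify the $\omega$-limit set with the equilibria $\{\dot{X} = 0\}$.
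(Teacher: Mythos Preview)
Your approach is essentially the paper's: both factor the dynamic through $G^{1/2}$ and the orthogonal projector $P_G \coloneqq G^{1/2}\capitalize{A}^T(\capitalize{A}G\capitalize{A}^T)^\dagger\capitalize{A}G^{1/2}$; you simply work with the complementary projector $Q = I - P_G$, which by Pythagoras gives the same identity $\dot{\mathcal{L}} = -\|QG^{1/2}\vectorize{C}\|^2$ that the paper writes as $-\|G^{1/2}\vectorize{C}\|^2 + \|P_G G^{1/2}\vectorize{C}\|^2$. Your argument for part~2 (from $QG^{1/2}\vectorize{C}=0$ directly to $\vectorize{\dot{X}} = -G^{1/2}QG^{1/2}\vectorize{C} = 0$) is a bit cleaner than the paper's eigenvalue case split on $P_G$, but logically equivalent.

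For part~3 you are actually \emph{more} careful than the paper. The paper's proof of item~3 is a one-sentence sketch (``$\mathcal{L}$ is finite, monotone, and bounded below, therefore $\dot{X}$ must eventually converge to $0$'') that does not supply the uniform-continuity/Barbalat step or any trajectory bound on $G$; the concerns you raise are real and are simply glossed over in the paper, so the extra work you outline (Barbalat or LaSalle under boundedness of $\|G^{1/2}\|_{\mathrm{op}}$) is not a detour but a genuine strengthening.
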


\begin{proof}
    It's easy to see that $\frac{d}{dt}\mathcal{L}(t)=tr(C\dot{X}(t))$. By the definition 
    of the \dynamicSDP, we have that:
    \[
        \frac{d}{dt}\mathcal{L}(t) = -\vectorize{C}^TG\vectorize{C}+\vectorize{C}^TG
            \capitalize{A}^T(\capitalize{A}G\capitalize{A}^T)^\dagger \capitalize{A}G\vectorize{C}
    \]
    \begin{enumerate}
        \item Write $P_G\coloneqq G^{\frac{1}{2}}\capitalize{A}^T(\capitalize{A}G\capitalize{A}^T)^\dagger 
            \capitalize{A}G^{\frac{1}{2}}$. It's easy to see that $P_G$ is symmetric. Moreover,
            $P_G^2=G^{\frac{1}{2}}\capitalize{A}^T(\capitalize{A}G\capitalize{A}^T)^\dagger 
            \capitalize{A}G^{\frac{1}{2}}G^{\frac{1}{2}}\capitalize{A}^T(\capitalize{A}G\capitalize{A}^T)^\dagger 
            \capitalize{A}G^{\frac{1}{2}}=G^{\frac{1}{2}}\capitalize{A}^T(\capitalize{A}G\capitalize{A}^T)^\dagger 
            \capitalize{A}G^{\frac{1}{2}}$, i.e. $P_G^2=P_G$. Therefore, $P_G$ is an orthogonal projection.
            
            In particular, for any $v, \norm{v}_2\geq \norm{P_G v}_2$, where equality holds 
			if and only if $v$ is an eigenvector of $P_G$.
            Hence,
            \begin{align*}
                \frac{d}{dt}\mathcal{L}(t) &= -\vectorize{C}^TG\vectorize{C}+ \vectorize{C}^TG^{\frac{1}{2}}
                    P_GG^{\frac{1}{2}}\vectorize{C}\\
					&= -\vectorize{C}^TG\vectorize{C}+ \vectorize{C}^TG^{\frac{1}{2}}
                    P_GP_GG^{\frac{1}{2}}\vectorize{C}\\
					&= -\vectorize{C}^TG\vectorize{C}+\norm{P_GG^{\frac{1}{2}}\vectorize{C}}_2^2\\
                    & \le 0,
            \end{align*}
            where in the inequality we used the fact that $P_G$ is an orthogonal projection.
        \item Due to the proof above, $\frac{d}{dt}\mathcal{L}(t)$ becomes zero if only if the inequality
            is an equality in the previous proof. In particular, this implies that $G^{\frac{1}{2}}\vectorize{C}$ 
			is an eigenvector of the projection
            $P_G$. Note that $P_G$ only has two eigenvalues, 0 and 1. If $P_GG^{\frac{1}{2}}\vectorize{C}=0$,
            then by definition $\vectorize{\dot{X}}(t)=-G\vectorize{C}$. However, in this case we also have 
            that 
            \[
                \frac{d}{dt}\mathcal{L}(t) = -\vectorize{C}^TG\vectorize{C}= -\norm{G^{\frac{1}{2}}\vectorize{C}}_2^2
            \]
            Therefore $G^{\frac{1}{2}}\vectorize{C}$ must be zero as well. This implies that 
            $\vectorize{\dot{X}}(t)=0$. On the other hand, if 
            $P_GG^{\frac{1}{2}}\vectorize{C}=G^{\frac{1}{2}}\vectorize{C}$, then
            \begin{align*}
                \vectorize{\dot{X}}(t) &= -G\vectorize{C}+G^{\frac{1}{2}}P_GG^{\frac{1}{2}}\vectorize{C}\\
                    & = -G\vectorize{C} + G^{\frac{1}{2}}G^{\frac{1}{2}}\vectorize{C}\\
                    & = 0
            \end{align*}
            \item We know that $\mathcal{L}$ at time $0$ is finite and its derivative is negative. The derivative
                remains negative as long as $\dot{X}\neq 0$. In addition, $\mathcal{L}(t)=tr(CX(t))$ is lower 
                bounded from below. Therefore, $\dot{X}$ must eventually converge to $0$.
    \end{enumerate}
\end{proof}

\section{\dynamicOneCap}\label{sec:1-dynamic}

In section~\ref{sec:overview} we introduced a general recipe for \dynamicSDP~\eqref{eq:physarum-dynamics-sdp-general}. 
In this section we study the soundness and convergence of the \dynamicOne.
For most part of this section,
we focus on the case where the dynamic has feasible starting point. 
We address the extension of the \dynamicOne{}
to infeasible starting points in section~\ref{sec:1-infeasible}.
It remains unclear what is 
a natural generalization of the conductance matrix $C^{-1} X$ in the \dynamicLP~\eqref{eq:physarum-dynamics-lp-proj}. We argue that by setting 
\[
	G\coloneqq \frac{1}{2}(C^{-1}\otimes X + X\otimes C^{-1}),
\]
the \dynamicSDP{} reduces to the \dynamicLP{} in the diagonal 
setting. Without going over all the details, we note that the key point here is 
to observe that, when $C,X$ and $A_\ell$'s are diagonal, then the $(k,\ell)$-th 
entry of $\capitalize{A}G\capitalize{A}^T$ is $tr(A_kXA_\ell C^{-1})$, which equals 
to $a_k^TC^{-1}Xa_\ell$ where $a_k$ and $a_\ell$ are the LP linear constraints.
In other words, $\capitalize{A}G\capitalize{A}^T=ACX^{-1}A^T$ where $A$ is the 
LP constraint matrix. Also note that $\capitalize{A}G\vectorize{C}=b$.
Observe that $\frac{1}{2}(C^{-1}\otimes X + X\otimes C^{-1})$ is 
positive (semi-)definite when $X$ is positive (semi-)definite.

\subsection{Soundness of the \dynamicOneCap}
\label{sec:1-soundness}
In this section we show that, under mild conditions, if $X(t)$ is a matrix 
funciton defined by~\eqref{eq:physarum-dynamics-sdp-general} with conductance matrix 
$G=\frac{1}{2}(C^{-1} \otimes X + X\otimes C^{-1} )$, then in any finite time,
$X(t)$ stays feasible. Two types of conditions comprise the feasibility of $X(t)$ in the SDP:
\begin{itemize}
    \item Linear constraints of the form $tr(A_\ell X(t)) = b_\ell$ which can be summarized 
        in $\capitalize{A} \cdot \vectorize{X} = b$.
    \item Positive semi-definiteness of $X(t)$, i.e., $X(t) \succeq 0$.
\end{itemize}

In section~\ref{sec:general-recipe} we have shown that, as long as $G$ stays positive 
semi-definite, our dynamic stays linearly feasible. Now it remains to show 
that $X(t)$ stays within the positive definite cone in any finte time.

\paragraph{Positive definiteness} Proving that the Physarum dynamics stays within the 
positive definite cone is harder than proving linear feasibility. Typically one employs 
a \textbf{work function} type argument and shows that the 
work function does not explode to infinity. In~\cite{straszak2015natural} the work function 
for \dynamicLP{} is defined to be 
$\mathcal{B}(t) \coloneqq \sum_{i\in [n]}y_ic_i\ln x_i(t)$
where $y$ is some feasible solution to the LP and $c$ is the objective coefficient and $x(t)$
is the \dynamicLP~\eqref{eq:physarum-dynamics-lp-proj}. This is an entropy type 
work function, and the $c$ and $y$ terms are introduced to counter the $C^{-1} $ component 
in the conductance $C^{-1} X$. A crucial distinction between the SDP problem and the LP problem 
is that matrix multiplications are in general non-commutative. Consequently, differentiating 
a entropy type work function on matrices leads to very complicated expressions in terms 
of series. Handling these series would eventually require some commutativity assumption 
along the trajectory of the dynamic, which is unrealistic. 
Given the general recipe of \dynamicSDPs~\eqref{eq:physarum-dynamics-sdp-general}
and the choice of conductance $G=\frac{1}{2}(C^{-1} \otimes X + X\otimes C^{-1} )$, we define 
the following work function:
\begin{equation}
    \label{eq:work-function-definition}
    W(t) \coloneqq \ln \det (X(t))
\end{equation}
This is also an adaptation of a well-known barrier function for the positive definite cone. 
We establish the following lemma connecting the positive definiteness of $X(t)$ with the 
derivative of the work function $W(t)$.
\begin{lemma}
    \label{lem:pd-work-function-diff}
    Consider any dynamics $X(t)$ with a starting point $X(0)\succ 0$.
    If there exists some constant $\mu$ (independent of the time $t$) such that 
    \[
        \frac{d}{dt} W(t) \geq \mu,
    \]
    then for any finite time $T\geq 0$, we have $X(T)\succ 0$
\end{lemma}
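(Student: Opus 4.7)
The plan is to argue by contradiction: suppose the conclusion fails, so there exists some finite time $T\ge 0$ with $X(T)$ not positive definite. Since $X(0)\succ 0$ and the positive definite cone is open, I would define
\[
    T^\star := \sup\{\, T\ge 0 \,:\, X(t)\succ 0 \text{ for all } t\in[0,T]\,\},
\]
which is strictly positive. Under the failure hypothesis $T^\star$ is finite, and by continuity of $X(t)$ (as a solution of the ODE governing the dynamic) the matrix $X(T^\star)$ must lie on the boundary of the PSD cone, i.e., it has at least one zero eigenvalue.

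The second step is to exploit the work function $W(t)=\ln\det X(t)$. On the open interval $[0,T^\star)$ the matrix $X(t)$ is strictly positive definite, so all eigenvalues are strictly positive, $W$ is well-defined and differentiable, and the hypothesis gives $\dot W(t)\ge \mu$ pointwise. Integrating yields the finite lower bound
\[
    W(t)\;\ge\; W(0)+\mu\, t\qquad\text{for all }t\in[0,T^\star).
\]
Thus $\liminf_{t\to T^{\star-}} W(t)\ge W(0)+\mu T^\star>-\infty$.

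The third step is to derive the contradiction by showing that in fact $W(t)\to -\infty$ as $t\to T^{\star-}$. Since eigenvalues of a symmetric matrix depend continuously on the matrix, the eigenvalues $\lambda_1(t),\dots,\lambda_n(t)$ of $X(t)$ are continuous, and at least one of them, say $\lambda_{i^\star}(t)$, tends to $0^+$ as $t\to T^{\star-}$ while the others stay bounded (bounded above because $X(t)$ stays in a compact sublevel set, and below by zero on $[0,T^\star)$). Then
\[
    W(t)=\sum_{i=1}^n \ln\lambda_i(t)\;\le\;\ln\lambda_{i^\star}(t)+\sum_{i\ne i^\star}\ln\lambda_i(t)\;\longrightarrow\;-\infty,
\]
contradicting the finite lower bound derived above. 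Hence no such finite $T$ exists and $X(T)\succ 0$ for every finite $T\ge 0$.

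I expect the main obstacle is the bookkeeping around the limit at $T^\star$: one has to justify that $X(t)$ admits a continuous extension up to $T^\star$, that the eigenvalues' continuity lets us identify a vanishing one, and that the remaining eigenvalues stay bounded away from both $0$ and $\infty$ on $[0,T^\star)$ so that they do not secretly contribute $+\infty$ to $W(t)$. The boundedness from above for the remaining eigenvalues follows from continuity on the closed interval $[0,T^\star]$, while boundedness away from zero can be ensured by replacing $i^\star$ with the index of the smallest eigenvalue: whichever eigenvalue first reaches $0$ dominates the divergence, so only the $-\infty$ contribution matters. With those points handled, the lemma falls out of the contradiction between the integrated lower bound on $W$ and the forced divergence of $\ln\det X(t)$ at the boundary of the PSD cone.
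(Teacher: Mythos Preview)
Your argument is correct and follows essentially the same route as the paper's proof: assume a first time $T^\star$ where positive definiteness fails, integrate the lower bound $\dot W\ge\mu$ to get $W(t)\ge W(0)+\mu t$ on $[0,T^\star)$, and contradict this with $W(t)\to-\infty$ at the boundary of the PSD cone. Your bookkeeping about bounding the non-vanishing eigenvalues is more than you need---continuity of the determinant alone gives $\det X(t)\to\det X(T^\star)=0$, hence $W(t)=\ln\det X(t)\to-\infty$, with no need to track individual eigenvalues.
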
 
\begin{proof}
    If at any time $X(t)\not\succ 0$, then by the assumption that $X(0)\succ 0$ and the 
    continuity of the eigenvalues of $X(t)$, there must exists a finite time 
    when $X(t)$ obtains a zero eigenvalue. Consider the first time $T> 0$ where 
    $X(T)$ obtains a zero eigenvalue. Now we have
    \begin{align*}
        \lim_{t\to T^-}W(t) &= W(0)+\lim_{t\to T^-}\int_{s=0}^t \frac{d}{ds} W(s) ds \\
            & \geq W(0) +\mu T
    \end{align*}
    However, by our assumption that $X(T)$ obtains a zero eigenvalue, $\lim_{t\to T^-}W(t)$
    must be divergent and not lower bounded by any finite value. This is a contradoction
    and we get that for any finite time $T\geq 0, X(T)\succ 0$.
\end{proof}
With Lemma~\ref{lem:pd-work-function-diff}, showing that our \dynamicOne,
or indeed, any concrete instantiation of our general recipe of \dynamicSDPs,
stays within the positive definite cone reduces to showing that the work function's derivative
is lower bounded. This is where we have to introduce an additional condition on the objective 
of the SDP. We require that $C^{-1} $ is linearly feasible, in other words, 
$\capitalize{A}\vectorize{C^{-1} }=b$. The reason is that in the \dynamicLP's work 
function $\mathcal{B}(t)$, the entropy-like nature corrects and replaces the $C^{-1} $ terms
showing up in the derivative. Some may argue that requiring $C^{-1} $ to be linearly feasible
is too restrictive. However, in section~\ref{sec:augmentation}, we present how to 
augment the problem so that $C^{-1} $ is always linearly feasible. Moreover,
in section~\ref{sec:experiment}, we show strong empirical evidences that 
the dynamic stays within the positive definite cone regardless of this mild assumption.
In section~\ref{sec:2-dynamic} we show that a different choice of the conductance matrix 
will remove this condition on $C^{-1} $. Now, we lower bound the derivative
of the work function for our 
\dynamicOne, and hence prove that the dynamic stays within the 
positive definite cone.
\begin{theorem}
    \label{thm:1-dynamic-pd-cone}
    If $C^{-1} $ and $X(0)$ are linearly feasible, then for the \dynamicOne{} where 
    the conductance is $G=\frac{1}{2}(C^{-1} \otimes X + X\otimes C^{-1} )$, when $X\succ 0$, the derivative 
    of the work function $W(t)=\ln \det (X(t))$ is lower bounded by a time-independent
    constant. Furthermore, by Lemma~\ref{lem:pd-work-function-diff}, the \dynamicOne{} stays in the positive definite cone at any finite time.
\end{theorem}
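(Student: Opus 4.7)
The plan is to compute $\frac{d}{dt}W(t)$ explicitly using Jacobi's formula and then show that the only potentially unbounded contribution is actually nonnegative, leaving a simple constant lower bound. I will work in the vectorized notation throughout since the dynamic~\eqref{eq:physarum-dynamics-sdp-general} is already stated that way.

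First I will apply Jacobi's formula to get $\frac{d}{dt}W(t) = \mathrm{tr}(X^{-1}\dot{X}) = \vectorize{X^{-1}}^T \vectorize{\dot{X}}$, using symmetry of $X^{-1}$ and the standard identity $\vectorize{A}^T\vectorize{B} = \mathrm{tr}(A^T B)$. Substituting the dynamic gives
\begin{equation*}
\frac{d}{dt}W(t) \;=\; -\,\vectorize{X^{-1}}^T G\vectorize{C} \;+\; \vectorize{X^{-1}}^T G\capitalize{A}^T\bigl(\capitalize{A}G\capitalize{A}^T\bigr)^\dagger \capitalize{A}\,G\vectorize{C}.
\end{equation*}
Next I will establish the two key algebraic identities that make the conductance $G = \tfrac{1}{2}(C^{-1}\otimes X + X \otimes C^{-1})$ behave so nicely. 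Using the Kronecker-vec identity together with symmetry of $C$ and $X$, one checks that
\begin{equation*}
G\vectorize{C} \;=\; \vectorize{X} \qquad\text{and}\qquad G\vectorize{X^{-1}} \;=\; \vectorize{C^{-1}}.
\end{equation*}
These are the SDP analogues of the LP identity $G c = x$ built into the \dynamicLP, and they are what motivates the symmetrization $\tfrac{1}{2}(C^{-1}\otimes X + X\otimes C^{-1})$ in the first place.

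With these identities the first term collapses to $\vectorize{X^{-1}}^T\vectorize{X} = \mathrm{tr}(X^{-1}X) = n$, and the second term rearranges, using symmetry of $G$, as
\begin{equation*}
\bigl(G\vectorize{X^{-1}}\bigr)^T \capitalize{A}^T\bigl(\capitalize{A}G\capitalize{A}^T\bigr)^\dagger \capitalize{A}\,\vectorize{X} \;=\; \bigl(\capitalize{A}\vectorize{C^{-1}}\bigr)^T \bigl(\capitalize{A}G\capitalize{A}^T\bigr)^\dagger \bigl(\capitalize{A}\vectorize{X}\bigr).
\end{equation*}
Here is where both hypotheses enter at once: linear feasibility of $C^{-1}$ gives $\capitalize{A}\vectorize{C^{-1}} = b$, and by Theorem~\ref{thm:x-stays-feasible} the dynamic preserves linear feasibility of $X(t)$, so $\capitalize{A}\vectorize{X} = b$. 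Thus the second term equals $b^T(\capitalize{A}G\capitalize{A}^T)^\dagger b$, which is nonnegative because $G \succeq 0$ forces $\capitalize{A}G\capitalize{A}^T \succeq 0$ and hence its Moore–Penrose pseudoinverse is PSD as well.

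Putting it together yields $\frac{d}{dt}W(t) = -n + b^T(\capitalize{A}G\capitalize{A}^T)^\dagger b \;\geq\; -n$, which is a time-independent lower bound, so Lemma~\ref{lem:pd-work-function-diff} applies with $\mu = -n$ and we conclude $X(T) \succ 0$ for all finite $T \geq 0$. The main obstacle, and the only nonroutine step, is verifying the identity $G\vectorize{X^{-1}} = \vectorize{C^{-1}}$ (and its companion $G\vectorize{C}=\vectorize{X}$) with the right Kronecker-vec conventions; once these are in hand, the two linear-feasibility conditions plug in symmetrically and the rest is bookkeeping.
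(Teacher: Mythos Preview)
Your proof is correct and follows essentially the same approach as the paper: compute $\frac{d}{dt}W(t)=\vectorize{X^{-1}}^T\vectorize{\dot X}$, use the identities $G\vectorize{C}=\vectorize{X}$ and $G\vectorize{X^{-1}}=\vectorize{C^{-1}}$, and then invoke linear feasibility of both $C^{-1}$ and $X(t)$ to reduce the second term to $b^T(\capitalize{A}G\capitalize{A}^T)^\dagger b\geq 0$, yielding the bound $-n$. The only cosmetic differences are that the paper collapses the first term via $\vectorize{C^{-1}}^T\vectorize{C}=n$ rather than $\vectorize{X^{-1}}^T\vectorize{X}=n$, and writes $(\capitalize{A}G\capitalize{A}^T)^{-1}$ instead of the pseudoinverse (legitimate since $X\succ 0$ makes $G\succ 0$).
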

\begin{proof}
    First we compute the derivative of $W(t)$~\cite{petersen2008matrix}
    \[
        \frac{d}{dt}W(t) = tr(X^{-1} (t)\dot{X}(t))
    \]
    Note that $tr(X^{-1} \dot{X}(t))=\vectorize{X^{-1} }^T\vectorize{\dot{X}}(t)$,
    so by the definition of the \dynamicSDPs~\eqref{eq:physarum-dynamics-sdp-general},
    and that $G=\frac{1}{2}(C^{-1} \otimes X + X\otimes C^{-1} )$, we have
    \begin{align*}
        \frac{d}{dt}W(t) &= \vectorize{X^{-1} }^T\vectorize{\dot{X}}(t)\\
            &= - \vectorize{X^{-1} }^T(I - G\capitalize{A}^T(\capitalize{A}G\capitalize{A}^T)^{-1} \capitalize{A})
            G\vectorize{C}\\
            &= - \vectorize{X^{-1} }^TG\vectorize{C} + \vectorize{X^{-1} }^TG\capitalize{A}^T
                (\capitalize{A}G\capitalize{A}^T)^{-1} \capitalize{A}G\vectorize{C}
    \end{align*}
    Note that~\footnote{We frequently use the rule: 
        $\vectorize{ABC} = (C^T \otimes A) \vectorize{B}$.} $G\vectorize{X^{-1} }=\frac{1}{2}((C^{-1} \otimes X)\vectorize{X^{-1} } +
        (X\otimes C^{-1} )\vectorize{X^{-1} })=\vectorize{C^{-1} }$, and similarly
        $G\vectorize{C}=\vectorize{X}$. Hence we have
    \begin{align*}
        \frac{d}{dt}W(t) &= - \vectorize{C^{-1} }^T\vectorize{C} + \vectorize{C^{-1} }^T\capitalize{A}^T
            (\capitalize{A}G\capitalize{A}^T)^{-1} \capitalize{A}G\vectorize{C}\\
            &= -n + \vectorize{C^{-1} }\capitalize{A}^T(\capitalize{A}G\capitalize{A}^T)^{-1} \capitalize{A}
                \vectorize{X}\\
            &= -n +b^T(\capitalize{A}G\capitalize{A}^T)^{-1} b\\
            &\geq -n
    \end{align*}
    where the third equality uses the condition that $C^{-1} $ and $X(0)$ are linear feasible,
    and the inequality uses the fact that $\capitalize{A}G\capitalize{A}^T$, and therefore 
    $(\capitalize{A}G\capitalize{A}^T)^{-1} $, is positive definite.
\end{proof}
\begin{remark}
    In fact, based on the proof above, it suffices that any positive multiple of $C^{-1}$ is linearly feasible. Note that scaling the objective does not change the solution of the problem.
\end{remark}
This concludes that, under the assumption that $C^{-1} $ is linearly feasible, and with 
a feasible starting point $X(0)\succ 0$
the \dynamicOne{} is sound, in other word, the dynamic stays 
linearly feasible and within the positive definite cone.

\subsection{Convergence of the \dynamicOneCap}
\label{sec:1-convergence}
In this section we consider the convergence of the \dynamicOneCap.
In particular, we study how objective value behaves along the trajectory of the Physarum
dynamic. 
Since we have established in section~\ref{sec:1-soundness} that our \dynamicOne{} is sound, under the condition that $C^{-1} $ is linearly feasible, 
we have the immediate corollary of Theorem~\ref{thm:dynamic-reduce-objective}:

\begin{corollary}
    \label{cor:1-dynamic-convergence}
    Given a feasible starting point $X(0)\succ 0$, and that $C^{-1} $ is linearly feasible, 
    then \dynamicOne{} satisfies the following:
    \begin{enumerate}
        \item $\frac{d}{dt}\mathcal{L}(t)\leq 0$
        \item $\frac{d}{dt}\mathcal{L}(t)$ becomes zero if and only if $\dot{X}(t)=0$
        \item The dynamic converges to equilibrium $\lim_{t\to \infty}\norm{\dot{X}(t)}=0$
    \end{enumerate}
\end{corollary}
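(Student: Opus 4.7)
The plan is to derive Corollary~\ref{cor:1-dynamic-convergence} as a direct application of Theorem~\ref{thm:dynamic-reduce-objective-restate}, whose three conclusions match those of the corollary verbatim, except that the third conclusion requires $\mathcal{L}(t)$ to be bounded from below. So the bulk of the work is to verify that the hypotheses of that theorem hold along the entire trajectory of the \dynamicOne, and then to exhibit a lower bound for $\mathcal{L}(t)$ using the positivity assumptions of the problem.

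First I would argue that the conductance $G(t) = \tfrac{1}{2}(C^{-1} \otimes X(t) + X(t) \otimes C^{-1})$ is symmetric and positive semi-definite at every $t \ge 0$. Symmetry is immediate because the Kronecker product of symmetric matrices is symmetric and the sum of symmetric matrices is symmetric. For positive semi-definiteness one uses that $C^{-1} \succ 0$ and, by Theorem~\ref{thm:1-dynamic-pd-cone} combined with Theorem~\ref{thm:x-stays-feasible}, $X(t) \succ 0$ for every finite $t$; the Kronecker product of two positive semi-definite matrices is positive semi-definite, and so is the sum. Hence the hypotheses of Theorem~\ref{thm:dynamic-reduce-objective-restate} hold at every time $t$, which immediately yields conclusions (1) and (2) of the corollary.

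For conclusion (3) I need to show that $\mathcal{L}(t) = \operatorname{tr}(C X(t))$ is bounded from below. The idea is to use that $C \succ 0$ (by the positive-SDP assumption) and $X(t) \succ 0$ (by soundness), so their product has a trace that is non-negative; more precisely, $\operatorname{tr}(C X(t)) = \operatorname{tr}(C^{1/2} X(t) C^{1/2}) \ge 0$, since $C^{1/2} X(t) C^{1/2}$ is positive semi-definite as a congruence of a positive definite matrix. This provides the uniform lower bound $\mathcal{L}(t) \ge 0$ required by statement (3) of Theorem~\ref{thm:dynamic-reduce-objective-restate}, and so $\lim_{t\to\infty} \|\dot{X}(t)\| = 0$ follows.

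The only subtle point I anticipate is making sure that the invocation of Theorem~\ref{thm:dynamic-reduce-objective-restate} is legitimate throughout $[0,\infty)$, rather than only on any finite interval: Theorem~\ref{thm:1-dynamic-pd-cone} guarantees positive definiteness of $X(T)$ for every finite $T$, which is exactly what is needed to ensure that the conductance remains positive semi-definite for every $t$ at which we differentiate $\mathcal{L}$. No blow-up of $X(t)$ can interfere with the lower bound on $\mathcal{L}(t)$, since the bound $\mathcal{L}(t) \ge 0$ holds whenever $X(t) \succeq 0$ and $C \succ 0$, independently of the norm of $X(t)$. Hence the three statements follow together, and no extra assumption beyond those of the corollary is required.
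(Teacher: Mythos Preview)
Your proposal is correct and follows exactly the route the paper intends: the corollary is stated as an ``immediate corollary'' of Theorem~\ref{thm:dynamic-reduce-objective} once soundness (Theorem~\ref{thm:1-dynamic-pd-cone}) and linear feasibility (Theorem~\ref{thm:x-stays-feasible}) are in hand. You have simply spelled out the details the paper leaves implicit, in particular the verification that $G(t)\succeq 0$ along the trajectory and the lower bound $\mathcal{L}(t)\ge 0$ needed for item~(3).
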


This corollary does not immediately imply the dynamic's convergence to optimality.
The dynamic might not converge to a single point, but multiple equilibrium points.
We define the following equilibrium set to charactarize them:
\begin{equation}
    \label{eq:1-eqset-definition}
    EQ_1 \coloneqq \{X: v_1(\vectorize{X}) = 0, \capitalize{A} \vectorize{X} = b, X \succeq 0 \}
\end{equation}

where we defined $v_1$ as the velocity corresponding to our autonomous \dynamicSDP. In other words, 
$$v_1(\vectorize{X}) = G \capitalize{A}^T (A G \capitalize{A}^T)^\dagger \capitalize{A} \vectorize{X} \text{ where } G = \frac{1}{2} \left(C^{-1} \otimes X + X \otimes C^{-1}\right).$$
Moreover, we used the Moore-Penrose pseudo inverse to accomodate the possibility that 
the dynamic might approach singular points at the limit. Now we further charactarize the positive definite
points in the equilibrium set $EQ_1$, and consequently deduce a conditional convergence-to-optimality
result on the \dynamicOne.

\begin{lemma}
    \label{lem:1-eq-pd-opt}
    If $X\in EQ_1$ and $X\succ 0$, then $X$ is an optimal solution to SDP~\eqref{eq:SDP-definition}.
\end{lemma}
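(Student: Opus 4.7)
The plan is to construct a dual pair $(y, S)$ directly from the equilibrium equation and verify the SDP optimality conditions (primal feasibility, dual feasibility, and complementary slackness) for the pair $(X, y, S)$.

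First, I would rewrite $\vectorize{\dot X} = 0$ as a statement about a dual slack candidate. From the dynamics~\eqref{eq:physarum-dynamics-sdp-general}, setting $y \coloneqq (\capitalize{A}G\capitalize{A}^T)^\dagger \capitalize{A}G\vectorize{C} \in \mathbb{R}^m$ and using the projection identity built into the equilibrium condition yields $G\capitalize{A}^T y = G\vectorize{C}$, equivalently $G\vectorize{S} = 0$ with $S \coloneqq C - \sum_{\ell} y_\ell A_\ell$. This $S$ is symmetric because $C$ and each $A_\ell$ are.

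Next I would exploit strict positive definiteness of $G$. Since the SDP is positive, $C \succ 0$, hence $C^{-1} \succ 0$; combined with the hypothesis $X \succ 0$, both Kronecker factors $C^{-1} \otimes X$ and $X \otimes C^{-1}$ are positive definite (their eigenvalues are products of positive eigenvalues), and so is their average $G$. Invertibility of $G$ together with $G\vectorize{S} = 0$ then forces $S = 0$, i.e., $\sum_{\ell} y_\ell A_\ell = C$, which means $(y, 0)$ is dual feasible. Since $X$ is primal feasible by the definition of $EQ_1$, complementary slackness $tr(XS) = 0$ is trivial, and the primal and dual objectives coincide: $tr(CX) = \sum_{\ell} y_\ell \, tr(A_\ell X) = \sum_{\ell} y_\ell b_\ell = b^T y$. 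Weak duality then certifies that $X$ is an optimal primal solution.

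The only delicate step is the leap from $G\vectorize{S} = 0$ to $S = 0$, and it depends crucially on strict positive definiteness of $G$, which in turn is where the hypothesis $X \succ 0$ enters. If $X$ were merely positive semidefinite, $G$ would be only PSD, and one would obtain the weaker conclusion $XSC^{-1} + C^{-1}SX = 0$, requiring a secondary trace-with-$S$ argument to reach $SX = 0$ together with a further argument for dual feasibility $S \succeq 0$. The strict positivity $X \succ 0$ is precisely what makes the short, direct route available and sidesteps this complication.
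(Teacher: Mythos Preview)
Your proof is correct and follows essentially the same route as the paper: both construct the dual vector $y=(\capitalize{A}G\capitalize{A}^T)^{-1}\capitalize{A}G\vectorize{C}$ (which equals the paper's $(\capitalize{A}G\capitalize{A}^T)^{-1}\capitalize{A}\vectorize{X}$ since $G\vectorize{C}=\vectorize{X}$ for this conductance), use invertibility of $G$ when $X\succ 0$ to deduce $\vectorize{C}=\capitalize{A}^T y$ (i.e.\ $S=0$), and conclude via weak duality. The only cosmetic difference is that the paper invokes the identity $G^{-1}\vectorize{X}=\vectorize{C}$ explicitly, whereas you phrase the same step as ``$G\vectorize{S}=0$ with $G$ invertible implies $S=0$''.
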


\begin{proof}
    By definition~\eqref{eq:1-eqset-definition}, any point $X\in EQ_1$ is a feasible solution to 
    SDP~\eqref{eq:SDP-definition}. If $X\succ 0$, we prove its optimality using week duality.
    Note that $G$ is nonsingular if $X\succ 0$, therefore we have:
    \[
        G^{-1} \vectorize{X} = \capitalize{A}^T(\capitalize{A}G
        \capitalize{A}^T)^{-1} \capitalize{A}\vectorize{X}
    \]
    Write $y\coloneqq (\capitalize{A}G\capitalize{A}^T)^{-1} 
        \capitalize{A}\vectorize{X}$ and note that $G^{-1} \vectorize{X}
        = \vectorize{C}$, then we have that 
    \[
        \vectorize{C} = \capitalize{A}^Ty
    \]
    In particular, by the definition~\eqref{eq:SDP-definition}, $(y,0)$ is a dual feasible solution.
    Moreover,
    \begin{align*}
        \vectorize{C}^T\vectorize{X} &= \vectorize{C}^TG \capitalize{A}^Tp\\
            &= \vectorize{X}^T\capitalize{A}^Tp\\
            & = b^Tp
    \end{align*}
    where in the third equality we used the feasibily of $X$. This means that for the primal-dual
    pair $(X,y,0)$, the primal and dual objective conincides. Therefore by weak duality we have 
    that $X$ is an optimal solution to the SDP.
\end{proof}

With Corollary~\ref{cor:1-dynamic-convergence} and Lemma~\ref{lem:1-eq-pd-opt}, we can obtain the 
following conditional convergence-to-optimality result for our \dynamicOne.

\begin{theorem}
    \label{thm:1-dynamic-cond-opt}
    Starting with a feasible $X(0)\succ 0$, assuming that $C^{-1} $ is linearly feasible,
    if $\forall X\in EQ_1, X\succ 0$, then the \dynamicOne{} is sound 
    and converges to the optimum of SDP~\eqref{eq:SDP-definition}.
\end{theorem}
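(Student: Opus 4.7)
The plan is to assemble the theorem from results already proved in this section together with a compactness argument and a LaSalle-style invariance principle. Soundness is essentially immediate: since $G=\tfrac12(C^{-1}\otimes X+X\otimes C^{-1})\succeq 0$ whenever $X\succeq 0$, Theorem~\ref{thm:x-stays-feasible} keeps the trajectory linearly feasible, and Theorem~\ref{thm:1-dynamic-pd-cone} keeps $X(t)\succ 0$ at every finite $t$ under the standing hypothesis that $C^{-1}$ is linearly feasible. So the bulk of the work is on the convergence statement.

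For convergence I would first observe that $\mathcal{L}(t)=\operatorname{tr}(CX(t))$ is bounded below by $0$, because $C\succ 0$ and $X(t)\succeq 0$. Hence all three conclusions of Corollary~\ref{cor:1-dynamic-convergence} apply, giving monotone decrease of $\mathcal{L}$, a well-defined limit $\mathcal{L}^{*}\coloneqq\lim_{t\to\infty}\mathcal{L}(t)$, and $\|\dot{X}(t)\|\to 0$. Using $C\succ 0$ a second time, the sublevel set $\{X\succeq 0:\capitalize{A}\vectorize{X}=b,\ \operatorname{tr}(CX)\le\mathcal{L}(0)\}$ is closed and bounded (the eigenvalues of any feasible $X$ inside it are capped by $\mathcal{L}(0)/\lambda_{\min}(C)$), so monotonicity of $\mathcal{L}$ confines the entire trajectory to this compact slice and Bolzano--Weierstrass produces at least one accumulation point $X^{*}$, which is automatically feasible by closedness of the PSD cone.

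The crux, and the step I expect to be the main obstacle, is verifying $X^{*}\in EQ_1$, after which the hypothesis $EQ_1\subseteq\{X\succ 0\}$ lets Lemma~\ref{lem:1-eq-pd-opt} declare $X^{*}$ optimal. If $X^{*}\succ 0$ this is easy: $(\capitalize{A}G(X)\capitalize{A}^T)^{\dagger}$ is continuous at invertible arguments, so $v_1$ is continuous at $X^{*}$ and passing to the limit along a subsequence $t_k\to\infty$ with $X(t_k)\to X^{*}$ gives $v_1(X^{*})=\lim(-\dot X(t_k))=0$. The hard part is ruling out a singular boundary accumulation point, since the Moore--Penrose pseudoinverse can be discontinuous there. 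My intended route is a LaSalle-style argument: the $\omega$-limit set of the trajectory is non-empty, closed, and positively invariant under the autonomous flow, and the velocity vanishes on it, so it is contained in $EQ_1$; the hypothesis $EQ_1\subseteq\{X\succ 0\}$ then excludes the boundary of the PSD cone outright, forcing every accumulation point to be positive definite and in $EQ_1$ by the continuous-limit argument above. A backup route, should invariance be delicate to justify through the pseudoinverse, is to exploit the bound $\frac{d}{dt}W(t)\ge -n$ from the proof of Theorem~\ref{thm:1-dynamic-pd-cone}: a singular accumulation point would force $W(t_k)=\ln\det X(t_k)\to-\infty$ along a subsequence, and the global lower bound combined with the compactness of the trajectory should yield a contradiction.

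Once every accumulation point lies in $EQ_1$ and is therefore positive definite and optimal by Lemma~\ref{lem:1-eq-pd-opt}, all accumulation points share the common objective value $\mathcal{L}^{*}=\operatorname{OPT}$. Combined with the monotone convergence of $\mathcal{L}(t)$ established above, this yields $\operatorname{tr}(CX(t))\to\operatorname{OPT}$, which is the desired convergence to the optimum of~\eqref{eq:SDP-definition}.
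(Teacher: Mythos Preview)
Your overall strategy matches the paper's: the paper simply cites Corollary~\ref{cor:1-dynamic-convergence} and Lemma~\ref{lem:1-eq-pd-opt} and states the theorem without any further argument. You go further than the paper and correctly flag the real sticking point---whether accumulation points of the trajectory actually lie in $EQ_1$ when the Moore--Penrose pseudoinverse may be discontinuous at singular limits---but neither of your two proposed remedies closes that gap.

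For the LaSalle route, positive invariance of the $\omega$-limit set and the conclusion that the velocity vanishes on it both presuppose that the autonomous field $v_1$ is continuous (and the flow well-posed) on a neighborhood of the closure of the trajectory. That is precisely what fails if the $\omega$-limit set meets the boundary of the PSD cone, so invoking $EQ_1\subseteq\{X\succ 0\}$ only \emph{after} you have placed the $\omega$-limit set inside $EQ_1$ is circular. For the backup route, the inequality $\frac{d}{dt}W(t)\ge -n$ from Theorem~\ref{thm:1-dynamic-pd-cone} integrates only to $W(t)\ge W(0)-nt$, which is perfectly compatible with $\ln\det X(t_k)\to -\infty$ along a sequence $t_k\to\infty$; compactness of the trajectory in Frobenius norm does not bound $\ln\det$ from below near the PSD boundary, so no contradiction arises. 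In short, the paper does not address this subtlety either, and your outline is already at least as rigorous as the paper's own treatment, but an honest proof would still require an independent argument ruling out accumulation on the boundary of the PSD cone (e.g., a uniform-in-$t$ lower bound on $\lambda_{\min}(X(t))$), which neither you nor the paper supplies.
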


We conjecture that the \dynamicOne{} converges to optimality regardless
of the condition on $C$ and $EQ_1$, and we summarize the conjecture as Conjecture~\ref{conj:general-dynamic-convergence} in section~\ref{sec:1-infeasible}, which also captures the behavior of the dynamic with linearly infeasible starting point. 
We remark that in section~\ref{sec:experiment} we present strong empirical 
evidence supporting our conjecture, motivating furthur research into this Physarum dynamic
to confirm its soundness and convergence to optimality. In section~\ref{sec:2-dynamic}
we show that choosing a different conductance for our general recipe of \dynamicSDP{} suffices to guarantee both soundness and convergence to optimality, showcasing 
the power of our general framework.

\subsection{Extension to Infeasible Starting Point}
\label{sec:1-infeasible}
In the previous sections we discussed the soundness and convergence of our 
\dynamicOne{} with a feasible starting point. 
In this section, we show that it is possible to extend the dynamic to cope with
infeasible starting points.
We also introduce a different perspective on the definition of $\dot{X}(t)$ which 
will be the starting point of the extension of the dynamic.

\paragraph{Update problem} Geometrically speaking, with positive definite $G$,
$-(I - G\capitalize{A}^T(\capitalize{A}G\capitalize{A}^T)^{-1} \capitalize{A})$ 
is a projection into the kernel of $\capitalize{A}$,
under the inner product induced by $G^{-1} $. In particular, given any $w$,
we have the following:
\[
    -(I - G\capitalize{A}^T(\capitalize{A}G\capitalize{A}^T)^{-1} \capitalize{A})w
    =\text{argmin}_{y} \left \{ \norm{w-y}^2_{G^{-1} }: \capitalize{A}y=0 \right \}
\]
We will not prove this claim, but instead, we 
prove the following charactarization of $\dot{X}(t)$ as an update problem:

\begin{theorem}
    \label{thm:xdot-update-feasible}
    Consider the \dynamicSDP~\eqref{eq:physarum-dynamics-sdp-general}, 
    and positive semi-definite conductance $G$, then
    \begin{equation}
        \label{eq:xdot-opt-feasible}
        \dot{X} = \text{argmin}_{F}\left \{ \vectorize{C}^T\vectorize{F} +
         \frac{1}{2}\vectorize{F}^T G^\dagger  \vectorize{F}:
         \capitalize{A}\vectorize{F}=0\right \},
    \end{equation}
\end{theorem}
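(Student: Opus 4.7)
The plan is to recognize that the right-hand side is a convex quadratic program with a linear equality constraint and to show that the Karush--Kuhn--Tucker (KKT) conditions characterize exactly the vector $\vectorize{\dot X}$ defined by~\eqref{eq:physarum-dynamics-sdp-general}. Since $G^\dagger$ is positive semi-definite (as $G$ is), the objective is convex, and since the constraint is linear, the KKT conditions are both necessary and sufficient for a global minimum.

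First, I would form the Lagrangian
\[
L(\vectorize{F},\lambda) = \vectorize{C}^T\vectorize{F} + \tfrac{1}{2}\vectorize{F}^T G^\dagger \vectorize{F} - \lambda^T \capitalize{A}\vectorize{F}
\]
and write out the stationarity condition $\nabla_{\vectorize{F}}L = 0$, which gives $G^\dagger \vectorize{F} = \capitalize{A}^T\lambda - \vectorize{C}$. Next, I would left-multiply by $G$ to obtain $GG^\dagger \vectorize{F} = G\capitalize{A}^T\lambda - G\vectorize{C}$. For the candidate $\vectorize{F} = -(I - G\capitalize{A}^T(\capitalize{A}G\capitalize{A}^T)^\dagger \capitalize{A})G\vectorize{C}$, one sees that $\vectorize{F}\in\mathrm{range}(G)$, so $GG^\dagger \vectorize{F} = \vectorize{F}$.

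I would then verify that this candidate together with the multiplier $\lambda = (\capitalize{A}G\capitalize{A}^T)^\dagger \capitalize{A}G\vectorize{C}$ satisfies all KKT conditions: stationarity follows from rearranging the definition of $\vectorize{F}$, while primal feasibility $\capitalize{A}\vectorize{F}=0$ is precisely Lemma~\ref{lem:xdot-in-kernel}. By convexity of the objective and linearity of the constraint, this KKT point is the unique minimizer (uniqueness being modulo the kernel of $G^\dagger$ restricted to $\ker\capitalize{A}$, which does not affect the argmin inside the feasible set once we fix a representative in $\mathrm{range}(G)$).

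The main subtlety I anticipate is handling the pseudoinverses in the rank-deficient case: when $G$ is merely PSD, one must check that $GG^\dagger$ acts as identity on the relevant subspace, and that $(\capitalize{A}G\capitalize{A}^T)^\dagger$ correctly plays the role of the inverse when computing $\lambda$. This can be dispatched by working with $\tilde{\capitalize{A}} := \capitalize{A}G^{1/2}$ (as in the proof of Lemma~\ref{lem:xdot-in-kernel}) and decomposing $G^{1/2}\vectorize{C}$ into its components in $\mathrm{range}(\tilde{\capitalize{A}}^T)$ and $\ker \tilde{\capitalize{A}}$; the computation then reduces to the full-rank case on the relevant subspace. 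Once this is done, matching the resulting expression with the definition of $\vectorize{\dot X}$ in~\eqref{eq:physarum-dynamics-sdp-general} completes the proof.
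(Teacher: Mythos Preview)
Your proposal is correct and follows essentially the same Lagrange-multiplier argument as the paper: form the Lagrangian, write the stationarity condition $G^\dagger\vectorize{F}=\capitalize{A}^T\lambda-\vectorize{C}$, multiply through by $G$ (the paper multiplies by $\capitalize{A}G$), solve for the multiplier $\lambda=(\capitalize{A}G\capitalize{A}^T)^\dagger\capitalize{A}G\vectorize{C}$, and recover the formula for $\vectorize{\dot X}$. The only minor difference is that you phrase the argument as \emph{verifying} that the given $\vectorize{\dot X}$ satisfies the KKT conditions (invoking Lemma~\ref{lem:xdot-in-kernel} for primal feasibility), whereas the paper \emph{derives} the formula by solving the KKT system; you are also somewhat more explicit about the pseudoinverse subtleties than the paper, which leaves those details implicit.
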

\begin{proof}
    We directly solve the minimization problem using the method of Lagrange multipliers.
    Using the Lagrange multiplier $p$, $\dot{X}$ is the solution to the following 
    system:
    \begin{equation}
        \label{eq:xdot-lagranage-feasible}
        \begin{array}{rcl}
            \vectorize{C}+G^\dagger \vectorize{\dot{X}} & = & \capitalize{A}^Tp \\
            \capitalize{A}\vectorize{\dot{X}} & = & 0
        \end{array}
    \end{equation}
    Multiplying both sides of the first equation in~\eqref{eq:xdot-lagranage-feasible} with 
    $\capitalize{A}G$, we have that:
    \[
        \capitalize{A}G\vectorize{C} = \capitalize{A}G\capitalize{A}^Tp
    \]
    Solving for $p$, we get 
    \[
        p = (\capitalize{A}G\capitalize{A}^T)^\dagger \capitalize{A}G\vectorize{C}
    \]
    Plug it back into the first equation, we see that 
    \[
        \vectorize{\dot{X}} = -(I - G\capitalize{A}^T(\capitalize{A}G\capitalize{A}^T)^\dagger \capitalize{A})
            G\vectorize{C},
    \]
    which coincide with our original definition of the \dynamicSDP.
\end{proof}

Introducing the update problem allows us to extend the \dynamicSDP{} naturally 
to accommodate the infeasible starting point. In general, our \dynamicOne{} 
dynamic can be defined as the following:

\begin{equation}
    \label{eq:xdot-opt-general}
        \dot{X} = \text{argmin}_{F}\left \{ \vectorize{C}^T\vectorize{F} +
         \frac{1}{2}\vectorize{F}^T G^\dagger  \vectorize{F}:
         \capitalize{A}\vectorize{F}=b-\capitalize{A}\vectorize{X}(t)\right \},
\end{equation}
where $G=\frac{1}{2}(C^{-1} \otimes X + X \otimes C^{-1} )$ is still the conductance matrix.
Similar to the proof of Theorem~\ref{thm:xdot-update-feasible}, we give without proof 
the solution to the general update problem, and write out the \dynamicSDP{} 
explicitly:

\begin{equation}
    \label{eq:xdot-general-definition}
    \dot{X}(t) = G\capitalize{A}^T(\capitalize{A}G\capitalize{A}^T)^\dagger b-\vectorize{X}(t)
\end{equation}

It should be easy to see that we can equivalently define $\dot{X}$ with an auxilary variable
$Q$ such that $\dot{X}(t)\coloneqq Q(t)-X(t)$ where 
\begin{equation}
    \label{eq:qupdate-opt}
    Q = \text{argmin}_F\left \{ \vectorize{F}^TG^\dagger \vectorize{F}:
        \capitalize{A}\vectorize{F}=b \right \}
\end{equation}

\paragraph{Approaching linear feasibility} When the Physarum starts with a feasible point,
Theorem~\ref{thm:x-stays-feasible} shows that the dynamic stays linearly feasible. We now 
show that with a linearly infeasible starting point, the dynamic~\eqref{eq:xdot-general-definition}
approaches linear feasibility exponentially. 

\begin{theorem}
    \label{thm:exponential-linear-feasibility}
    Given any $X(0)\succ 0$, for any linear 
    constraints $\ell$, we have:
    \[
        tr(A_\ell X(t))-b_\ell = e^{-t}(tr(A_\ell X(0))-b_\ell)
    \]
    In case $X(0)$ is linearly feasible, $X(t)$ remains linearly feasible and 
    the dynamic~\eqref{eq:xdot-opt-general} reduces to the dynamic~\eqref{eq:xdot-opt-feasible}.
\end{theorem}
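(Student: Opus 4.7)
The plan is to work with the residual vector $r(t) \coloneqq \capitalize{A}\vectorize{X}(t) - b \in \mathbb{R}^m$, whose $\ell$-th component is exactly $tr(A_\ell X(t)) - b_\ell$, and show that it satisfies the linear ODE $\dot r(t) = -r(t)$. From the explicit form of the extended dynamic in~\eqref{eq:xdot-general-definition}, differentiating $r$ and applying $\capitalize{A}$ on the left gives
\[
\dot r(t) \;=\; \capitalize{A}\vectorize{\dot X}(t) \;=\; \capitalize{A}G\capitalize{A}^T(\capitalize{A}G\capitalize{A}^T)^\dagger\, b \;-\; \capitalize{A}\vectorize{X}(t).
\]
So the whole claim reduces to showing $\capitalize{A}G\capitalize{A}^T(\capitalize{A}G\capitalize{A}^T)^\dagger b = b$, after which $\dot r(t) = b - \capitalize{A}\vectorize{X}(t) = -r(t)$ and the exponential decay $r(t) = e^{-t}r(0)$ follows by integrating this componentwise linear ODE.

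The key lemma I would prove first is that $b$ lies in the range of $\capitalize{A}G\capitalize{A}^T$, since for any symmetric PSD matrix $M$ the product $MM^\dagger$ is the orthogonal projection onto $\mathrm{range}(M)$, and so $MM^\dagger b = b$ iff $b \in \mathrm{range}(M)$. I would argue this in two steps: (i) since the SDP is assumed feasible, $b \in \mathrm{range}(\capitalize{A})$; (ii) since $G = \tfrac12(C^{-1}\otimes X + X\otimes C^{-1})$ is positive definite whenever $X \succ 0$ and $C \succ 0$, we have $\mathrm{range}(\capitalize{A}G\capitalize{A}^T) = \mathrm{range}(\capitalize{A}\capitalize{A}^T) = \mathrm{range}(\capitalize{A})$. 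Under Theorem~\ref{thm:1-dynamic-pd-cone}, $X(t) \succ 0$ for every finite $t$, so $G$ is positive definite along the whole trajectory and the identification holds at every time.

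For the second part of the statement, suppose $X(0)$ is linearly feasible, i.e., $r(0)=0$. The exponential formula immediately gives $r(t) \equiv 0$, so $X(t)$ stays linearly feasible. To see that~\eqref{eq:xdot-opt-general} collapses to~\eqref{eq:xdot-opt-feasible}, observe that when $\capitalize{A}\vectorize{X}(t)=b$ the right-hand side of the constraint in~\eqref{eq:xdot-opt-general} becomes $b - \capitalize{A}\vectorize{X}(t) = 0$, so the two update problems are literally the same quadratic program. Equivalently, one can check directly at the level of closed forms: the identity $G\vectorize{C} = \vectorize{X}$ (which is how the \dynamicOneCap{} was designed) rewrites~\eqref{eq:physarum-dynamics-sdp-general} as $\vectorize{\dot X} = -\vectorize{X} + G\capitalize{A}^T(\capitalize{A}G\capitalize{A}^T)^\dagger \capitalize{A}\vectorize{X}$, and substituting $\capitalize{A}\vectorize{X} = b$ yields exactly~\eqref{eq:xdot-general-definition}.

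The only genuinely subtle point is the justification of $\capitalize{A}G\capitalize{A}^T(\capitalize{A}G\capitalize{A}^T)^\dagger b = b$, since this is where the pseudo-inverse can misbehave; everything else is a straightforward linear ODE. One should also be careful that the argument only gives the exponential decay as long as $X(t) \succ 0$, but this is guaranteed throughout by the soundness results of the previous subsection, so the statement holds for all finite $t \ge 0$.
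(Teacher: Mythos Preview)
Your argument is correct, but it takes a detour the paper avoids. Instead of working from the closed form~\eqref{eq:xdot-general-definition} and then having to justify the pseudo-inverse identity $\capitalize{A}G\capitalize{A}^T(\capitalize{A}G\capitalize{A}^T)^\dagger b = b$, the paper simply reads the relation $\capitalize{A}\vectorize{\dot X}(t) = b - \capitalize{A}\vectorize{X}(t)$ directly off the constraint of the update problem~\eqref{eq:xdot-opt-general}: $\dot X$ is \emph{defined} as the minimizer subject to $\capitalize{A}\vectorize{F} = b - \capitalize{A}\vectorize{X}(t)$, so that equation holds by construction. From there the componentwise ODE $\dot\Delta_\ell = -\Delta_\ell$ and its solution $\Delta_\ell(t) = e^{-t}\Delta_\ell(0)$ are exactly as you wrote. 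This bypasses any discussion of ranges, feasibility of the SDP, or positive definiteness of $G$ along the trajectory.

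One related caution: you invoke Theorem~\ref{thm:1-dynamic-pd-cone} to guarantee $X(t)\succ 0$ so that $G$ is invertible, but that theorem assumes $X(0)$ is linearly feasible, which is precisely what is \emph{not} assumed here. The infeasible-start soundness result is Theorem~\ref{thm:general-dynamic-pd-cone}, which in the paper appears \emph{after} the present theorem, so relying on it would create a forward reference. The paper's route avoids this entirely since it never needs $G\succ 0$ for this statement.
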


\begin{proof}
    By definiton of the update problem~\eqref{eq:xdot-opt-general}, 
    $\capitalize{A}\vectorize{\dot{X}}(t) = b - \capitalize{A}\vectorize{X}(t)$.
    We can write the following differential equation:
    \begin{align*}
        \frac{d}{dt} \left ( tr(A_\ell X(t))-b_\ell \right ) &= tr(A_\ell \dot{X}(t))\\
         &= b_\ell - tr(A_\ell X(t))
    \end{align*}
    Therefore, the function $\Delta_\ell(t)\coloneqq tr(A_\ell X(t)) -b_\ell$ can be characterized
    by 
    \[
        \dot{\Delta_\ell(t)} = - \Delta_\ell(t)
    \]
    The solution to this differential equation is $\Delta_\ell(t)=e^{-t}\Delta_\ell(0)$, that 
    is, $X(t)$ converges to linear feasibility exponentially fast.

    In the case $X(0)$ is linearly feasible, $\Delta_\ell(0)=0$ and $\Delta_\ell(t)=0$, in other 
    words, $X(t)$ stays linearly feasible.
\end{proof}

\paragraph{Positive definiteness} Given Lemma~\ref{lem:pd-work-function-diff}, to show that 
the general \dynamicOne{} stays in the positive definite cone, we only need 
to show that the derivative of the work function $W(t)=\ln\det X(t)$ is lower bounded 
by some time-independent constant.

\begin{theorem}
    \label{thm:general-dynamic-pd-cone}
    If $C^{-1} $ is linearly feasible and $X(0)\succ 0$ (not necessarily linearly feasible), 
    then for the general \dynamicOne, when $X(t)\succ 0$,
    the derivative of the work function $W(t)=\ln\det X(t)$ is lower bounded by a time-independent
    constant. Furthermore, by Lemma~\ref{lem:pd-work-function-diff}, the dynamic stays in the positive 
    definite cone at any finite time.
\end{theorem}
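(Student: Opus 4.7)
The plan is to adapt the argument of Theorem~\ref{thm:1-dynamic-pd-cone} to the infeasible setting by working with the $Q$-formulation in~\eqref{eq:qupdate-opt}, since $\vectorize{\dot{X}} = \vectorize{Q} - \vectorize{X}$ cleanly separates the part of the velocity that depends on the linear-feasibility gap from the part that does not. First I would compute $\frac{d}{dt}W(t) = tr(X^{-1}\dot{X}) = \vectorize{X^{-1}}^{T}\vectorize{\dot{X}}$, and split it as $\vectorize{X^{-1}}^{T}\vectorize{Q} - \vectorize{X^{-1}}^{T}\vectorize{X}$. The second term is simply $tr(X^{-1}X)=n$, which is already time-independent, so the whole problem reduces to lower-bounding $\vectorize{X^{-1}}^{T}\vectorize{Q}$.

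Next I would derive a closed form for $Q$ by solving~\eqref{eq:qupdate-opt} with Lagrange multipliers, exactly as in the proof of Theorem~\ref{thm:xdot-update-feasible}: the KKT conditions give $\vectorize{Q} = G\capitalize{A}^{T}(\capitalize{A}G\capitalize{A}^{T})^{\dagger} b$. The crucial step, borrowed from the proof of Theorem~\ref{thm:1-dynamic-pd-cone}, is the identity $G\vectorize{X^{-1}} = \vectorize{C^{-1}}$, which holds because $(C^{-1}\otimes X)\vectorize{X^{-1}} = \vectorize{XX^{-1}C^{-1}} = \vectorize{C^{-1}}$ and symmetrically for the other Kronecker term. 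Since $G$ is symmetric, this gives $\vectorize{X^{-1}}^{T}G = \vectorize{C^{-1}}^{T}$, so
\begin{equation*}
\vectorize{X^{-1}}^{T}\vectorize{Q} \;=\; \vectorize{C^{-1}}^{T}\capitalize{A}^{T}(\capitalize{A}G\capitalize{A}^{T})^{\dagger}b.
\end{equation*}

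Now the hypothesis that $C^{-1}$ is linearly feasible, i.e.\ $\capitalize{A}\vectorize{C^{-1}} = b$, collapses this expression to $b^{T}(\capitalize{A}G\capitalize{A}^{T})^{\dagger}b$, which is non-negative because $\capitalize{A}G\capitalize{A}^{T}\succeq 0$ (and hence so is its Moore--Penrose pseudoinverse). Putting everything together,
\begin{equation*}
\frac{d}{dt}W(t) \;=\; b^{T}(\capitalize{A}G\capitalize{A}^{T})^{\dagger}b - n \;\ge\; -n,
\end{equation*}
a bound that does not depend on $t$, and invoking Lemma~\ref{lem:pd-work-function-diff} completes the argument.

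The main conceptual obstacle is not the calculation itself but justifying that the same identity $G\vectorize{X^{-1}} = \vectorize{C^{-1}}$ that drove the feasible-start proof still suffices once an inhomogeneous term has been introduced into the update problem; concretely, one must be sure that replacing the orthogonal-projection expression for $\vectorize{\dot{X}}$ by the $Q$-formulation really does not spoil the cancellation that yields the constant $-n$. The short computation above shows that the affine shift contributes exactly $-n$ while the $Q$-part contributes a non-negative quadratic form in $b$, so no further assumption on $X(0)$ beyond $X(0)\succ 0$ (and in particular no linear feasibility of $X(0)$) is needed.
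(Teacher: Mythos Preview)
Your proposal is correct and essentially identical to the paper's own proof: both use the explicit form $\vectorize{\dot{X}} = G\capitalize{A}^{T}(\capitalize{A}G\capitalize{A}^{T})^{\dagger}b - \vectorize{X}$ (the paper cites~\eqref{eq:xdot-general-definition} directly rather than rederiving $Q$), invoke $G\vectorize{X^{-1}} = \vectorize{C^{-1}}$ and the linear feasibility of $C^{-1}$ to obtain $\frac{d}{dt}W(t) = b^{T}(\capitalize{A}G\capitalize{A}^{T})^{-1}b - n \ge -n$, and finish with Lemma~\ref{lem:pd-work-function-diff}. The only cosmetic difference is that you phrase the argument through the $Q$-formulation and use the pseudoinverse, whereas the paper writes the ordinary inverse (justified since $X\succ 0$ makes $G\succ 0$).
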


\begin{proof}
    Following the proof of Theorem~\ref{thm:1-dynamic-pd-cone}, we will lower bound $tr(X^{-1} \dot{X})$.
    By definition~\eqref{eq:xdot-general-definition}, 
    \begin{align*}
        \vectorize{X^{-1} }^T\vectorize{\dot{X}} &= \vectorize{X^{-1} }^T
            G\capitalize{A}^T(\capitalize{A}G\capitalize{A}^T)^{-1} b -n \\
            &= \vectorize{C^{-1} }^T\capitalize{A}^T(\capitalize{A}G\capitalize{A}^T)^{-1} b -n
    \end{align*}
    By our assumption that $C^{-1} $ is linearly feasible, we have:
    \begin{align*}
        \vectorize{X^{-1} }^T\vectorize{\dot{X}} &= b^T(\capitalize{A}G\capitalize{A}^T)^{-1} b -n\\
            &\geq -n,
    \end{align*}
    where the inequality follows from the positive definiteness of $(\capitalize{A}G\capitalize{A}^T)^{-1} $.
    The rest of the proof is identical to that of Theorem~\ref{thm:1-dynamic-pd-cone}.
\end{proof}

\paragraph{Convergence} The convergence results in section~\ref{sec:1-convergence} is not so easily
generalized. The main obstacle is that even though the errors $\Delta_\ell$ approach zero 
exponentially fast, they still break the proofs where we use inequalities that are tight.
Nevertheless, we present the following conjecture.

\begin{conjecture}
    \label{conj:general-dynamic-convergence}
    If $X(0)\succ F$, for some feasible solution $F$, then the \dynamicOne{} stays in the positive definite cone and converges to the optimum of the SDP.
\end{conjecture}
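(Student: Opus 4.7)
The plan is to decouple the two claims — preservation of positive definiteness and convergence to the optimum — and to reduce each to the feasible case already handled in sections~\ref{sec:1-soundness} and~\ref{sec:1-convergence} by exploiting the exponentially fast approach to linear feasibility of Theorem~\ref{thm:exponential-linear-feasibility}. The hypothesis $X(0) \succ F$ for some feasible $F$ supplies a positive-definite slack $\Delta := X(0) - F \succ 0$ that can be used to absorb the exponentially decaying infeasibility.

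\textbf{Positive definiteness via a shifted process.} I would introduce
\[
    Y(t) := X(t) - e^{-t}\Delta,
\]
which by Theorem~\ref{thm:exponential-linear-feasibility} satisfies $\capitalize{A}\vectorize{Y}(t) = b$ for all $t \ge 0$ and has $Y(0) = F$. Since $e^{-t}\Delta \succ 0$, establishing $Y(t) \succeq 0$ at every finite time would immediately give $X(t) \succ 0$. To that end, I would apply Lemma~\ref{lem:pd-work-function-diff} to $Y$ with the barrier $\tilde W(t) := \ln \det Y(t)$, differentiate using $\dot Y = \dot X + e^{-t}\Delta$, and substitute the explicit form~\eqref{eq:xdot-general-definition}. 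The hope is that because $Y$ is exactly linearly feasible, the problematic term $\vectorize{X^{-1}}^T G \capitalize{A}^T (\capitalize{A} G \capitalize{A}^T)^{\dagger} b - n$ from the proof of Theorem~\ref{thm:general-dynamic-pd-cone} — which there required $C^{-1}$ to be linearly feasible in order to collapse to $b^T (\capitalize{A} G \capitalize{A}^T)^{\dagger} b - n \ge -n$ — is replaced, modulo an $O(e^{-t})$ perturbation coming from the shift, by the same nonnegative expression.

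\textbf{Convergence and main obstacle.} Once positive definiteness is secured and linear feasibility holds up to $O(e^{-t})$, $\mathcal{L}(t) = tr(CX(t))$ is eventually bounded below (by the SDP optimum plus a vanishing slack term) and is monotonically decreasing in the limit by a perturbed version of Theorem~\ref{thm:dynamic-reduce-objective}; hence $\mathcal{L}(t)$ tends to a finite limit. I would then extend Lemma~\ref{lem:1-eq-pd-opt} to equilibria of the general update problem~\eqref{eq:xdot-opt-general}: a positive-definite, linearly feasible equilibrium $X^*$ yields a dual-feasible $y := (\capitalize{A} G^* \capitalize{A}^T)^{\dagger} \capitalize{A} \vectorize{X^*}$ with matching primal and dual objective, so weak duality pins down optimality. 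The hard part — indeed, the very reason the statement is left as a conjecture — is twofold: (i) controlling the cross-term $e^{-t}\,tr(Y^{-1}\Delta)$ and the perturbation in the work-function derivative uniformly in $t$, since $Y^{-1}$ can become large as the dynamic approaches the PSD boundary; and (ii) ruling out equilibria on the boundary of the PSD cone, where the characterization of Lemma~\ref{lem:1-eq-pd-opt} breaks down. The second obstruction is exactly the one that the \dynamicTwo{} circumvents in section~\ref{sec:2-convergence} through an explicit central-path interpretation, which suggests that any successful proof of the present conjecture will likely require a comparably fine, central-path-like analysis of the \dynamicOne.
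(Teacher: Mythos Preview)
The statement is labeled a \emph{conjecture} in the paper, and the paper does not supply a proof; it offers only the empirical evidence of section~\ref{sec:experiment} and explicitly flags the convergence question as open (see the paragraph preceding Conjecture~\ref{conj:general-dynamic-convergence}). So there is no paper proof to compare against --- anything that actually closed the argument would be new.

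Your strategy is a reasonable plan of attack, and the shifted process $Y(t)=X(t)-e^{-t}\Delta$ is a nice device: your verification that $\capitalize{A}\vectorize{Y}(t)=b$ for all $t$ is correct, and $Y(0)=F$ is feasible. But the gaps you yourself flag are genuine and are precisely why the paper stops at a conjecture. The core difficulty is that the conductance $G=\tfrac12(C^{-1}\otimes X+X\otimes C^{-1})$ and the velocity $\dot X$ are functions of $X$, not of $Y$. The identity $G\vectorize{X^{-1}}=\vectorize{C^{-1}}$ that makes the work-function argument in Theorems~\ref{thm:1-dynamic-pd-cone} and~\ref{thm:general-dynamic-pd-cone} collapse to $b^T(\capitalize{A}G\capitalize{A}^T)^{-1}b-n$ has no analogue with $Y^{-1}$ in place of $X^{-1}$; what you call an ``$O(e^{-t})$ perturbation'' is really $tr\bigl((Y^{-1}-X^{-1})\dot X\bigr)+e^{-t}tr(Y^{-1}\Delta)$, and both pieces involve $Y^{-1}$ or $X^{-1}$, which are unbounded exactly along any approach to the PSD boundary --- the regime where you need the estimate. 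There is no a priori reason the exponential decay of the shift beats the potential blow-up of the inverse, so the lower bound on $\frac{d}{dt}\ln\det Y(t)$ is not established.

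The second obstacle you name --- ruling out singular equilibria --- is likewise left open in the paper (Theorem~\ref{thm:1-dynamic-cond-opt} is expressly conditional on $EQ_1$ containing only positive definite points), and nothing in your sketch addresses it beyond noting that the \dynamicTwo{} sidesteps it via the central-path argument. In short: your outline is an honest inventory of what would need to be done, but it does not resolve either obstruction, and neither does the paper.
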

In section~\ref{sec:experiment} we present empirical evidence that this 
conjecture is indeed true.

\section{\dynamicTwoCap} \label{sec:2-dynamic}
Section~\ref{sec:1-dynamic} gives a natural generalization of the conductance 
matrix in the \dynamicLP~\eqref{eq:physarum-dynamics-lp-proj}, 
but due to the intrinsic difficulties in matrix computations, we are only able to prove soundness for the subclass of positive SDPs where $C^{-1}$ or any positive multiple thereof satisfies the linear constraints. In addition, the current techniques only 
give conditional convergence-to-optimality for \dynamicOne.
 In this section, we show that by choosing the conductance
\[
    G \coloneqq X\otimes X,
\]
soundness is guaranteed for our general recipe of SDP 
Physarum dynamic~\eqref{eq:physarum-dynamics-sdp-general}.
Moreover, we prove that, unconditionally, this dynamic converges to \textbf{optimality}.
The resulting \dynamicSDP{} is referred to as \emph{\dynamicTwo}.
In this section we assume that the Physarum dynamic has a feasible starting point.

\subsection{Soundness of the \dynamicTwoCap}
\label{sec:2-soundness}

Recall that we state and prove Lemma~\ref{lem:xdot-in-kernel} and 
Theorem~\ref{thm:x-stays-feasible} general enough to include different 
choices of the conductance $G$, in particular, our new choice 
$G=X\otimes X$. Therefore, to establish soundness of our \dynamicTwo, we only need to show that $X(t)$ stays in 
the positive definite cone. In light of Lemma~\ref{lem:pd-work-function-diff},
we would like to give a lower bound on the derivative of the work function.
Unfortunately, the lower bound that we can get in the following might not 
be time-independent, at least not at first glance.

\begin{lemma}
    \label{lem:2-dynamic-work-function-bound}
    Given $X(t)\succ 0$ at some point $t$, the derivative of the work function $W(t)=\ln \det (X(t))$
    is lower bounded:
    \begin{equation}
        \label{eq:2-dynamic-workfunction-bound}
        \frac{d}{dt}\ln \det X(t) \geq -\sqrt{n}tr(CX(t))
    \end{equation}
\end{lemma}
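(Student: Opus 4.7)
The plan is to expand the derivative of the work function explicitly using the vectorized form of the dynamic, reduce it to an inner product involving the orthogonal projection $P_G = G^{1/2}\capitalize{A}^T(\capitalize{A}G\capitalize{A}^T)^\dagger \capitalize{A} G^{1/2}$ from the proof of Theorem~\ref{thm:dynamic-reduce-objective-restate}, and then control it by Cauchy--Schwarz. As a preliminary, I would record the two key identities stemming from the Kronecker structure of $G = X\otimes X$: since $X\succ 0$, we have $G^{1/2} = X^{1/2}\otimes X^{1/2}$, so that using $(B^T\otimes A)\vectorize{M}=\vectorize{AMB}$ and symmetry of $X$,
\[
G^{1/2}\vectorize{X^{-1}} = \vectorize{X^{1/2}X^{-1}X^{1/2}} = \vectorize{I},\qquad G^{1/2}\vectorize{C} = \vectorize{X^{1/2}CX^{1/2}}.
\]

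Next, starting from $\frac{d}{dt}\ln\det X(t) = tr(X^{-1}\dot X) = \vectorize{X^{-1}}^T\vectorize{\dot X}$ and rewriting $\vectorize{\dot X} = -G^{1/2}(I - P_G)G^{1/2}\vectorize{C}$, the two identities above collapse the derivative to
\[
\frac{d}{dt}\ln\det X(t) = -\vectorize{I}^T (I - P_G)\vectorize{X^{1/2}CX^{1/2}}.
\]
Now $I-P_G$ is itself an orthogonal projection, so writing it as $(I-P_G)^2$ and applying Cauchy--Schwarz gives
\[
\bigl|\vectorize{I}^T (I-P_G)\vectorize{X^{1/2}CX^{1/2}}\bigr| \le \|\vectorize{I}\|_2\cdot \|\vectorize{X^{1/2}CX^{1/2}}\|_2.
\]
Clearly $\|\vectorize{I}\|_2 = \sqrt{n}$, so the task reduces to bounding the norm of the matrix $M := X^{1/2}CX^{1/2}$ by $tr(CX)$.

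The step I expect to be the crux is this last bound, and it is exactly where the factor $\sqrt{n}$ enters. Because $C\succ 0$ and $X\succ 0$, $M$ is positive semidefinite with nonnegative eigenvalues $\lambda_1,\dots,\lambda_n$. Then
\[
\|\vectorize{M}\|_2 = \|M\|_F = \Bigl(\textstyle\sum_i\lambda_i^2\Bigr)^{1/2} \le \sum_i \lambda_i = tr(M) = tr(X^{1/2}CX^{1/2}) = tr(CX),
\]
using only that for nonnegative reals the $\ell_2$ norm is bounded by the $\ell_1$ norm. Combining the three displayed inequalities yields
\[
\frac{d}{dt}\ln\det X(t) \ge -\sqrt{n}\, tr(CX(t)),
\]
which is the claimed bound. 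The time-dependence of this lower bound is what forces the extra work in the subsequent soundness argument (one will need to absorb the $tr(CX(t))$ factor using that $tr(CX(t))$ is nonincreasing by Theorem~\ref{thm:dynamic-reduce-objective-restate}), but that is beyond the present lemma.
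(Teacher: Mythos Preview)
Your proof is correct and follows essentially the same route as the paper: both reduce the derivative to $-\vectorize{X^{-1}}^T G^{1/2}(I-P_G)G^{1/2}\vectorize{C}$, apply Cauchy--Schwarz together with the fact that $I-P_G$ is an orthogonal projection, and finish with the eigenvalue inequality $\sum_i\lambda_i^2\le(\sum_i\lambda_i)^2$ for the positive semidefinite matrix $X^{1/2}CX^{1/2}$. Your use of $G^{1/2}=X^{1/2}\otimes X^{1/2}$ to identify $G^{1/2}\vectorize{X^{-1}}=\vectorize{I}$ and $G^{1/2}\vectorize{C}=\vectorize{X^{1/2}CX^{1/2}}$ up front is a slightly cleaner bookkeeping of the same computation.
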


\begin{proof}
    By definiton of $\dot{X}(t)$, we have:
    \begin{align*}
        \frac{d}{dt}\ln \det X(t) &= tr(X^{-1} (t)\dot{X}(t))\\
            & = -\vectorize{X^{-1} }(I-G\capitalize{A}^T(\capitalize{A}G\capitalize{A}^T)^{-1} \capitalize{A})
            G\vectorize{C}
    \end{align*} 
    Write $P_G\coloneqq G^{\frac{1}{2}}\capitalize{A}^T(\capitalize{A}G\capitalize{A}^T)^{-1} 
    \capitalize{A}G^{\frac{1}{2}}$ and notice that $P_G$ is an orthogonal projection. Plug it 
    into the equation above:
    \begin{align*}
        \frac{d}{dt}\ln \det X(t) &=  -\vectorize{X^{-1} }G^{\frac{1}{2}}(I-P_G)G^{-\frac{1}{2}}\vectorize{C}\\
            &\geq -\sqrt{\vectorize{X^{-1} }^TG\vectorize{X^{-1} }}\norm{(I-P_G)G^{\frac{1}{2}}\vectorize{C}}
    \end{align*}
    where the inequality follows from the Cauchy-Schwartz inequality. Now by the definition that $G=X\otimes X$,
    we get that $\vectorize{X^{-1} }^TG\vectorize{X^{-1} }=tr(I)=n$. On the other hand, since $P_G$ is an 
    orthogonal projection, it's easy to verify that $I-P_G$ is an orthogonal projection as well,
    therefore,
    \[
		\norm{(I-P_G)G^{\frac{1}{2}}\vectorize{C}} \leq \norm{G^{\frac{1}{2}}\vectorize{C}}
	\]
    Moreover,
    $\norm{G^{\frac{1}{2}}\vectorize{C}}^2=tr(XCXC)$. Now by the cyclic property of trace, we have:
    \begin{align*}
        tr(XCXC) & = tr\left ( \left (X^{\frac{1}{2}}CX^{\frac{1}{2}}\right ) 
        \left (X^{\frac{1}{2}}CX^{\frac{1}{2}}\right ) \right ) \\
        & = \sum_{i=1}^n \sigma_i^2, \text{ where } \sigma_i \text{ are the 
        eigenvalues of } \left (X^{\frac{1}{2}}CX^{\frac{1}{2}}\right )\\
        & \leq \left (\sum_{i=1}^n\sigma_i \right )^2\\
        & = tr^2\left (X^{\frac{1}{2}}CX^{\frac{1}{2}}\right )
    \end{align*}
    where in the first inequality we need the assumption that $C\succ 0$ and $X\succ 0$
    which implies that all $\sigma_i$'s are nonnegative.

    Putting all these together,
    \begin{align*}
        \frac{d}{dt}\ln\det X &= tr(X^{-1} \dot{X}) \\
            &\geq - \sqrt{n}\cdot tr\left (X^{\frac{1}{2}}CX^{\frac{1}{2}}\right )\\
            &= -\sqrt{n}\cdot tr(CX)
    \end{align*}
\end{proof}

At first glance, the lower bound that we have in Lemma~\ref{lem:2-dynamic-work-function-bound}
is time dependent, but as it turns out, it is already enough for us to prove that 
$X(t)$ stays within the positive definite cone.

\begin{theorem}
    \label{thm:2-dynamic-pd-cone}
    If $X(0)\succ 0$ is linearly feasible, then for the \dynamicTwo{} 
    where the conductance is $G=X\otimes X$, $X(t)$ stays in the positive definite cone 
    at any finite time $t$.
\end{theorem}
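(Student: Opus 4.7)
The plan is to convert the time-dependent lower bound in Lemma~\ref{lem:2-dynamic-work-function-bound} into a time-independent one by exploiting the objective monotonicity established in Theorem~\ref{thm:dynamic-reduce-objective-restate}, and then invoke Lemma~\ref{lem:pd-work-function-diff}. The key observation is that $G = X(t)\otimes X(t) \succeq 0$ whenever $X(t)\succeq 0$, so Theorem~\ref{thm:dynamic-reduce-objective-restate} is applicable to the \dynamicTwo{} on every interval where positive definiteness is maintained.

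First, let $T^\star \coloneqq \sup\{T \geq 0 : X(t)\succ 0 \text{ for all } t\in[0,T]\}$; since $X(0)\succ 0$ and eigenvalues depend continuously on $t$, we have $T^\star > 0$. On $[0,T^\star)$, the conductance $G=X\otimes X$ is positive semi-definite, so Theorem~\ref{thm:dynamic-reduce-objective-restate} gives $\frac{d}{dt}tr(CX(t))\leq 0$, i.e., $tr(CX(t)) \leq tr(CX(0))$ for all $t\in[0,T^\star)$.

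Plugging this into Lemma~\ref{lem:2-dynamic-work-function-bound} yields, for all $t\in[0,T^\star)$,
\[
\frac{d}{dt}\ln\det X(t) \;\geq\; -\sqrt{n}\, tr(CX(t)) \;\geq\; -\sqrt{n}\, tr(CX(0)) \;=:\; \mu,
\]
a constant that does not depend on $t$. By Lemma~\ref{lem:pd-work-function-diff}, this uniform lower bound on $\frac{d}{dt}W(t)$ is incompatible with $X(t)$ acquiring a zero eigenvalue at any finite time: if $T^\star < \infty$, continuity would force some eigenvalue of $X(T^\star)$ to vanish, making $\lim_{t\to T^{\star-}} W(t) = -\infty$, which contradicts $W(t)\geq W(0) + \mu t$ on $[0,T^\star)$. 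Hence $T^\star = \infty$, and $X(t)\succ 0$ for every finite $t$.

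The only delicate point is the bootstrapping structure: the monotonicity of $\mathcal{L}$ relies on $G\succeq 0$, which in turn relies on $X\succeq 0$ — precisely what we are trying to prove. Restricting the argument to the maximal interval $[0,T^\star)$ of positive definiteness resolves this cleanly, because both ingredients are valid on that interval and together rule out $T^\star<\infty$. No further structural assumption on $C$ or on the equilibrium set is needed, in contrast to the \dynamicOne; this is exactly why the choice $G=X\otimes X$ is advantageous here.
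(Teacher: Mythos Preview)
Your proof is correct and follows essentially the same approach as the paper: both restrict attention to the maximal interval on which $X(t)\succ 0$, use Theorem~\ref{thm:dynamic-reduce-objective-restate} there to bound $tr(CX(t))\le tr(CX(0))$, feed this into Lemma~\ref{lem:2-dynamic-work-function-bound} to obtain a time-independent lower bound on $\frac{d}{dt}W(t)$, and then derive a contradiction with $W(t)\to -\infty$ at a putative finite exit time. Your explicit discussion of the bootstrapping circularity is a nice addition, but structurally the arguments coincide.
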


\begin{proof}
    If at any time $X(t)\not\succ 0$, then by the assumption that $X(0)\succ 0$ and the 
    continuity of the eigenvalues of $X(t)$, there must exists a finite time 
    when $X(t)$ obtains a zero eigenvalue. Consider the first time $T> 0$ where 
    $X(T)$ obtains a zero eigenvalue. In other words, for any $t< T, X(t)\succ 0$.
    Then by Theorem~\ref{thm:dynamic-reduce-objective}, for any $t\in [0,T),
    \frac{d}{dt}\mathcal{L}(t)\leq 0$, that is,
    \[
        \forall t\in [0,T), tr(CX(t))\leq tr(CX(0))
    \]
    Therefore, similar to the proof of Lemma~\ref{lem:pd-work-function-diff}, 
    and by Lemma~\ref{lem:2-dynamic-work-function-bound}, we have 
    \begin{align*}
        \lim_{t\to T^-}W(t) &= W(0)+\lim_{t\to T^-}\int_{s=0}^t \frac{d}{ds} W(s) ds \\
            & \geq W(0) -\sqrt{n} \cdot tr(CX(0))T,
    \end{align*}
    which is a time-independet constant, given a finite starting point $X(0)$.
    However, by our assumption that $X(T)$ obtains a zero eigenvalue, $\lim_{t\to T^-}W(t)$
    must be divergent and not lower bounded by any finite value. This yields a contradiction
    and we get that for any finite time $T\geq 0, X(T)\succ 0$.
\end{proof}

\subsection{Convergence of the \dynamicTwoCap} \label{sec:2-convergence}
Having established the soundness of the \dynamicTwo, we can 
directly conclude the following as a corollary of Theorem~\ref{thm:dynamic-reduce-objective}

\begin{corollary}
    \label{cor:2-dynamic-convergence}
    Given a feasible starting point $X(0)\succ 0$, then the \dynamicTwo{} satisfies the following:
    \begin{enumerate}
        \item $\frac{d}{dt}\mathcal{L}(t)\leq 0$
        \item $\frac{d}{dt}\mathcal{L}(t)$ becomes zero if and only if $\dot{X}(t)=0$
        \item The dynamic converges to equilibrium $\lim_{t\to \infty}\norm{\dot{X}(t)}=0$
    \end{enumerate}
\end{corollary}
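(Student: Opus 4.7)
The plan is to recognize this as a direct application of Theorem~\ref{thm:dynamic-reduce-objective-restate} to the specific conductance choice $G=X\otimes X$. First I would verify the single hypothesis of that theorem: that $G$ is symmetric positive semi-definite. This is immediate from the fact that $X(0)\succ 0$ combined with the soundness result (Theorem~\ref{thm:2-dynamic-pd-cone}), which guarantees $X(t)\succ 0$ for every finite $t$; then $X\otimes X$ inherits symmetry and positive (semi-)definiteness entry-wise from $X$, since the eigenvalues of a Kronecker product are pairwise products of the eigenvalues of the factors.

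Once this is in place, parts (1) and (2) of the corollary follow verbatim from parts (1) and (2) of Theorem~\ref{thm:dynamic-reduce-objective-restate}, with no extra work beyond instantiating $G=X(t)\otimes X(t)$ at each time $t$ along the trajectory.

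For part (3), the additional hypothesis of Theorem~\ref{thm:dynamic-reduce-objective-restate} is that $\mathcal{L}(t)$ is bounded from below. I would supply this as follows: since the SDP is positive, we have $C\succ 0$; by Theorem~\ref{thm:2-dynamic-pd-cone} the trajectory satisfies $X(t)\succ 0$, and by Theorem~\ref{thm:x-stays-feasible} it stays linearly feasible. Positive definiteness of both factors gives $\mathcal{L}(t)=\operatorname{tr}(CX(t))\ge 0$, so $\mathcal{L}$ is bounded below by $0$ uniformly in $t$. Hence the hypothesis of part (3) of Theorem~\ref{thm:dynamic-reduce-objective-restate} is met, and convergence to equilibrium $\lim_{t\to\infty}\norm{\dot X(t)}=0$ follows.

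There is no real obstacle here; the entire content of the corollary has been packaged into the earlier general theorem, and the only ingredient that required genuine effort (a uniform lower bound on $\mathcal{L}$, which in the first ansatz relied on the linear feasibility of $C^{-1}$) is now cheap, thanks to the soundness proof for the \dynamicTwo{} that establishes $X(t)\succ 0$ along the whole trajectory and thus allows us to combine $C\succ 0$ with $X\succeq 0$ to bound $\operatorname{tr}(CX(t))$ from below by zero without any side condition on $C$.
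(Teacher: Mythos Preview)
Your proposal is correct and matches the paper's approach exactly: the paper states this result without proof, noting only that it follows directly from Theorem~\ref{thm:dynamic-reduce-objective} once the soundness of the \dynamicTwo{} has been established. You have simply spelled out the verification that $G=X\otimes X$ is positive semi-definite and that $\mathcal{L}(t)\ge 0$ (via $C\succ 0$ and $X(t)\succ 0$), which is precisely the intended argument.
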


Similar to before, this corollary does not immediately imply the dynamic's convergence to 
optimality. However, we show that our \dynamicTwo{} indeed converge to 
the optimum of SDP~\eqref{eq:SDP-definition} using an approach very different from the 
above. In fact, we prove that given any feasible starting point $X(0)$, the \dynamicTwo{} coincides with a central path of the SDP. This argument is akin 
to the central path argument in~\cite{straszak2015natural}, where the barrier in their 
work is an entropy type barrier. However, we use a more canonical logarithm type 
barrier. Before we proceed to the statement and proofs, we point out that the main 
reason that such arguments cannot be applied for the \dynamicOne{} 
again boils down to the noncommutative nature of matrix multiplication. Differentiating
entropy-type barriers on matrices typically results in infinite series, and analyzing them
in closed-form would require communtativity assumption. However, with a clever choice of 
conductance matrix, we manage to bypass this issue while still following our general 
recipe of \dynamicSDPs~\eqref{eq:physarum-dynamics-sdp-general}. We believe that 
it showcases the power and potential of our general framework and motivates research into 
designing other conductance matrices that leads to Physarum dynamics with better properties.

\paragraph{Central path and \dynamicSDP} Now we prove that the \dynamicTwo{} starting from a feasible point $F\succ 0$ conincides with 
a central path of the SDP. Our barrier function is 
\[
	f(X)\coloneqq -\ln \det X + tr(F^{-1}X),
\]
and with the parameter $t\geq 0$, points on our central path are 

\begin{equation}
	\label{eq:central-path}
	\begin{array}{rcc}
		X(t) \coloneqq & \text{argmin} & t\cdot tr(CX) -\ln \det X + tr(F^{-1}X)\\
			& \text{s.t.} & \capitalize{A}\vectorize{X} = b\\
			& & X \succ 0
	\end{array}
\end{equation}
As $t\to\infty$, the central path~\eqref{eq:central-path} approaches to the optimal 
solution of SDP~\eqref{eq:SDP-definition}. Now we can state the main result of this 
section:

\begin{theorem}
	\label{thm:2-dynamic-central-path}
	Fix any feasible point $F\succ 0$, the solution $X(t)$ to 
	the central path problem~\eqref{eq:central-path} is the solution to the 
	\dynamicTwo{} with the starting point $X(0)= F$.
	Consequently, the Physarum dynamic converges to the optimum of 
	SDP~\eqref{eq:SDP-definition} when starting from any feasible 
	point $X(0)\succ 0$.
\end{theorem}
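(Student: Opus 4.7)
The plan is to show that the central path $X(t)$ defined by~\eqref{eq:central-path} satisfies exactly the same first-order system as the \dynamicTwo{} (as characterized by Theorem~\ref{thm:xdot-update-feasible}) and has the same initial value, so that ODE uniqueness forces the two trajectories to agree; standard central-path asymptotics then give convergence to the SDP optimum, and this can be applied with $F := X(0)$ for any strictly feasible starting point.

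First I would derive the stationarity conditions of the convex program in~\eqref{eq:central-path}. The gradient of the objective $t\cdot tr(CX) - \ln\det X + tr(F^{-1}X)$ is $tC - X^{-1} + F^{-1}$, so Lagrangian optimality under $\capitalize{A}\vectorize{X} = b$ gives
\begin{equation*}
\vectorize{tC - X^{-1} + F^{-1}} = \capitalize{A}^{T} y(t), \qquad \capitalize{A}\vectorize{X(t)} = b,
\end{equation*}
for some dual variable $y(t)$. Differentiating both equalities in $t$ and using the matrix-calculus identity $\frac{d}{dt} X^{-1} = -X^{-1}\dot X X^{-1}$, whose vectorization is $\vectorize{\tfrac{d}{dt} X^{-1}} = -(X^{-1}\otimes X^{-1})\vectorize{\dot X}$, one obtains
\begin{equation*}
\vectorize{C} + (X^{-1}\otimes X^{-1})\vectorize{\dot X} = \capitalize{A}^{T}\dot y(t), \qquad \capitalize{A}\vectorize{\dot X} = 0.
\end{equation*}
Since $G = X\otimes X$ is invertible whenever $X \succ 0$ and $G^{\dagger} = G^{-1} = X^{-1}\otimes X^{-1}$, this is precisely the KKT system produced in the proof of Theorem~\ref{thm:xdot-update-feasible} with the \dynamicTwo{} conductance. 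Hence $\dot X$ along the central path equals the \dynamicTwo{} velocity evaluated at $X(t)$.

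Next I would verify that the two trajectories start at the same point. Setting $t=0$ in~\eqref{eq:central-path} reduces the objective to $-\ln\det X + tr(F^{-1}X)$, whose unconstrained stationary point solves $X^{-1} = F^{-1}$, giving $X = F$; by the hypothesis that $F$ is linearly feasible this is also the constrained minimizer, so $X(0) = F$. Both the central path and the \dynamicTwo{} dynamic therefore solve the same smooth autonomous ODE on the open set of strictly feasible matrices (which Theorem~\ref{thm:2-dynamic-pd-cone} ensures the dynamic never leaves), so Picard-Lindel\"of uniqueness forces them to agree for all $t \geq 0$. For convergence, the standard central-path argument applies: as $t\to\infty$ the linear term $t\cdot tr(CX)$ dominates the bounded-below barrier, so minimizers must approach the primal optimum of~\eqref{eq:SDP-definition}. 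For an arbitrary feasible start $X(0)\succ 0$, one simply takes $F := X(0)$.

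The main obstacle is the algebraic identification in the first step: it is exactly the choice $G = X\otimes X$ that makes the vectorized Hessian of $-\ln\det X$ equal to $G^{-1}$, so that differentiating the central-path KKT system reproduces the \dynamicTwo{} ODE. The \dynamicOne{} conductance $\tfrac{1}{2}(C^{-1}\otimes X + X\otimes C^{-1})$ is not the inverse Hessian of any comparable single-barrier function, which is the structural reason the same argument cannot be transplanted there. Secondary points to check are smoothness of $t\mapsto (X(t),y(t))$ to legitimize the differentiation (implicit function theorem applied to the KKT equations, using strict convexity of $-\ln\det X$ on $\ker\capitalize{A}$) and the precise limit statement of the central path as $t\to\infty$; both are standard and should not pose real difficulty.
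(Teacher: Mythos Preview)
Your proposal is correct and follows essentially the same approach as the paper: write the KKT conditions for~\eqref{eq:central-path}, differentiate in $t$ using $\frac{d}{dt}X^{-1} = -X^{-1}\dot X X^{-1}$, and identify the resulting system with the \dynamicTwo{} ODE, after checking $X(0)=F$. The only cosmetic difference is that the paper solves the differentiated system explicitly for $\dot y$ and substitutes back to obtain the closed-form velocity, whereas you recognize the system directly as the Lagrangian system~\eqref{eq:xdot-lagranage-feasible} of Theorem~\ref{thm:xdot-update-feasible}; your added remarks on Picard--Lindel\"of uniqueness and implicit-function-theorem smoothness are points the paper leaves implicit.
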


\begin{proof}
	We first show that $X(0)=F$. With $t=0$, we write down the Lagrangian equations 
	of~\eqref{eq:central-path} (with the constraint $X\succ 0$ omitted):
	\begin{align*}
		-X^{-1}+F^{-1} &= \capitalize{A}^Tp\\
		\capitalize{A}\vectorize{X} &= b
	\end{align*}
	Clearly $(F,0)$ is a solution to the above system. This implies that $F$ is the 
	optimal solution to 
	$\min\left \{ -\ln \det X + tr(F^{-1}X): \capitalize{A}\vectorize{X} = b\right \}$.
	However, by our assumption that $F\succ 0$, we know that $F$ is feasible 
	for~\eqref{eq:central-path} when $t=0$. Therefore, $X(0)=F$.

	Now we prove that 
	\[
		\vectorize{\dot{X}}(t) = - (I - (X\otimes X)\capitalize{A}^T
		(\capitalize{A}(X\otimes X)\capitalize{A}^T)^{-1} \capitalize{A})
        (X\otimes X)\vectorize{C},
	\]
	which, by the definition of our Physarum dynamic, implies that the central path is 
	indeed the Physarum dynamic. Now we introduce the dual variable $y$ for the 
	linear constraints and consider the Lagrangian:
	\[
		L(X,y) = t \cdot tr(CX) -\ln \det X + tr(F^{-1}X) + y^T(\capitalize{A}\vectorize{X}-b)
	\]
	It is apriori not so clear why it is possible to drop the positive definiteness 
	constraint from the central path problem. However, once we show that the ``relaxed'' central 
	path problem coincide with the \dynamicTwo, then due to our soundness 
	statements in section~\ref{sec:2-soundness} we see that we indeed did not lose anything by dropping 
	the positive definiteness constraint. Now, we take the partial derivative of $L$ with respect 
	to $X$:
	\[
		\frac{\partial}{\partial X}L(X,y) = tC - X^{-1} + F^{-1}+\sum_\ell y_\ell A_\ell
	\]
	Setting it to zero, we get the relation between the optimal $X$ and $y$:
	\[
		X^{-1} = tC+F^{-1}+\sum_\ell y_\ell A_\ell
	\]
	Now we differentiate with respect to $t$ on both side and get 
	\[
		-X^{-1}\dot{X}X^{-1} = C +\sum_\ell \dot{y}_\ell A_\ell
	\]
	In the vectorized notation, this is 
	\[
		\vectorize{\dot{X}} = (X\otimes X)\vectorize{C}+(X\otimes X)\capitalize{A}^T\dot{y}
	\]
	However, since $\capitalize{A}\vectorize{X}=b$, we get $\capitalize{A}\vectorize{\dot{X}}=0$.
	In terms of $\dot{y}$, this becomes
	\[
		\capitalize{A}(X\otimes X)\vectorize{C}+\capitalize{A}(X\otimes X)\capitalize{A}^T\dot{y}
		= 0
	\]
	Solving for $\dot{y}$ and plugging it back, we finally obtain
	\[
		\vectorize{\dot{X}}(t) = - (I - (X\otimes X)\capitalize{A}^T
		(\capitalize{A}(X\otimes X)\capitalize{A}^T)^{-1} \capitalize{A})
        (X\otimes X)\vectorize{C},
	\]
	in other words, the central path~\eqref{eq:central-path} coincides with the 
	\dynamicTwo. 
\end{proof}

\section{Algorithms and Experimental Results}
\label{sec:algo-experiment}
In this section we introduce the augmented SDP and show that $C^{-1}$
is feasible therein, explain our discretizations of the \dynamicOne{} and hence obtain numerical algorithms for 
solving SDP, and we present empirical evaluations of the numerical algorithms 
to complement where our current theoretical techniques are lacking. Beyond that,
our empirical evaluations show the potential of nature-inspired algorithms in 
solving positive SDP. Our implementations and experiments are available \hypertarget{anchor:solverLink}{at}:

\begin{mdframed}[backgroundcolor=lightgray]
   \centering
    \solverLink
\end{mdframed}

To accurately implement an algorithm for our Physarum solver framework, we define a set of update steps whereby we calculate $\dot{X}(t)$ according to the $X(t)$ value. 
Note that according to  \eqref{eq:physarum-dynamics-sdp-general}, $\dot{X}$ can be written as,
$$\vectorize{\dot{X}} = G \capitalize{A}^T (\underset{L}{\underbrace{\capitalize{A} G \capitalize{A}^T}})^\dagger \underset{b}{\underbrace{\capitalize{A} G \vectorize{C}}} - GC$$
Recall that according to Theorem \ref{thm:xdot-update-feasible}'s proof $p$ is defined as the Lagrangian multipliers and the solution to the equation $Lp = b$. In both \dynamicOne{} and \dynamicTwo{}, $p$ can be considered as a dual candidate solution, and when the dynamic converges, it can produce a dual solution; therefore, we keep track of it to obtain a primal-dual solver.

That said, we split up the computation of the update step into three phases which repeat in all extensions of our SDP solvers:
\begin{enumerate}
    \item We calculate matrix $L$ element-by-element. $L_{i,j}$ is equal to $\vectorize{A_i}^T G \vectorize{A_j}$ in the general framework. For the \dynamicOne, it is equal to $tr(C^{-1} A_i X A_j)$ and for the \dynamicTwo{} it is equal to $tr(X A_i X A_j)$.
    \item We then calculate the solution to $Lp = b$ and present $p$ as a dual candidate solution. In general, $p = L^\dagger b$.
    \item Finally, we put it all together by calculating $\dot{X}$ using the following formula:
    $$\vectorize{\dot{X}} = G \capitalize{A}^T p - GC.$$
    With the \dynamicOne, the formula can be re-written in the following matrix notation without the need for vectorization:
    \begin{equation} \label{eq:1st-ansatz-simplified-update}
        \dot{X} = \sum_{\ell = 1}^m p_\ell \frac{C^{-1}A_\ell X + X A_\ell C^{-1}}{2} - X
    \end{equation}
    By analogy, one can also write down the following simplified formulation for the \dynamicTwo{} in a similar fashion:
    \begin{equation} \label{eq:2nd-ansatz-simplified-update}
    \dot{X} = \sum_{\ell=1}^m p_\ell X A_\ell X - X C X
    \end{equation}
    We omit the steps of computations here.

\end{enumerate}

\subsection{Augmentation}
\label{sec:augmentation}
Recall that in section~\ref{sec:1-dynamic}, our theoretical guarantees relies on 
the assumption that $C^{-1}$ is linearly feasible. While we conjecture that such assumption 
is unnecessary, in this section we show that SDP~\eqref{eq:SDP-definition} can be augmented 
so that the inverse of the objective matrix is linearly feasible. To this end, we augment 
all matrices in~\eqref{eq:SDP-definition} by one row and column. We define $\bar{C}$ and $\bar{A}_\ell$ 
according to the following scheme using a certain $\gamma$:
\[
	\bar{C} = \begin{pmatrix} \gamma C & 0 \\ 0 & 1 \end{pmatrix}, \bar{A}_i = \begin{pmatrix} A_i & 0 \\ 
		0 & \alpha_i \end{pmatrix}~~,~~\alpha_i = b_i - \frac{tr(A_i C^{-1})}{\gamma}.
\]
We will also refer to an arbitrary matrix $T$ as \textit{augmented} if all entries of the 
right column and bottom row except bottom right entry are zero. 
With these, define the following SDP the \textbf{Augmented SDP}:
\begin{equation} 
	\label{eq:augmented-SDP-defenition}
	\min \{ tr( \bar{C} \bar{X} ) : tr( \bar{A}_{\ell} \bar{X} ) = 
		b_\ell~\forall \ell \in [m],\, \bar{X} \succeq 0 \}.
\end{equation}

One can easily show $\bar{C}^{-1}$ is feasible in this case:
$$tr(\bar{A}_\ell \bar{C}^{-1}) = \frac{A_\ell C^{-1}}{\gamma} + \alpha_\ell = b_\ell$$
Therefore, by choosing $\bar{X}(0) = \bar{C}^{-1}$ we start with a feasible solution.
By Theorem~\ref{thm:x-stays-feasible}, $X(t)$ remains linearly feasible. Next, we present a lemma that states if $X(0)$ is augmented, then $X(t)$ stays augmented under certain conditions. We omit the proof here because it boils down to tedious verification and offers no insights. This lemma establishes a mapping 
between the dynamic in the augmented problem and the original problem.
\begin{lemma}
	\label{lem:1-dynamic-x-stays-augmented}
	For any given autonomous dynamic on $\bar{X}(t)$ with a symmetric and augmented starting point $\bar{X}(0)$ where
	$$\dot{\bar{X}}(t) = v(\bar{X}(t)),$$
	$X(t)$ remains augmented throughout the dynamic if $v(X)$ is obtained by addition, multiplication, and inversion over symmetric augmented matrices including $\bar{X}$.
\end{lemma}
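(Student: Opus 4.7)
The plan is to reduce the claim to a single closure statement: the set of symmetric augmented matrices is closed under addition, multiplication (of two, not necessarily in a symmetric way — we will handle that), and inversion of nonsingular elements. Once closure is established, the velocity $v(\bar{X})$ lies in the linear subspace $\mathcal{S}$ of augmented matrices whenever $\bar{X}\in\mathcal{S}$, and a standard invariance argument for ODEs finishes the proof.

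First I would make explicit the block description: an augmented matrix has the form
\begin{equation*}
T = \begin{pmatrix} T' & 0 \\ 0 & \tau \end{pmatrix}
\end{equation*}
with $T'\in\mathbb{R}^{n\times n}$ and $\tau\in\mathbb{R}$. I would then verify, by straight block computation, three closure properties. (i) If $T_1,T_2$ are augmented, then so is $T_1+T_2$, with $(T_1+T_2)' = T_1'+T_2'$ and the scalar entry equal to $\tau_1+\tau_2$. (ii) If $T_1,T_2$ are augmented, then $T_1T_2$ is augmented because the off-diagonal blocks of the product are $T_1'\cdot 0+0\cdot\tau_2=0$ and $0\cdot T_2'+\tau_1\cdot 0=0$; the diagonal blocks are $T_1'T_2'$ and $\tau_1\tau_2$. (iii) If $T$ is augmented and invertible, then $T'$ is invertible and $\tau\neq 0$, and $T^{-1}$ has the block diagonal form with blocks $T'^{-1}$ and $\tau^{-1}$. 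Symmetry is trivially preserved by addition and inversion; for products it is not, but this is harmless because $v(\bar{X})$ is asserted to be built from these operations and end up symmetric, and the closure of augmentedness only needs multiplication at the level of general (not necessarily symmetric) augmented matrices, which we just verified.

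Next I would argue, by structural induction on the expression tree that defines $v(X)$ out of augmented symmetric inputs (including $\bar{X}$ itself, the $\bar{A}_\ell$, $\bar{C}$, and constants), that every intermediate matrix is augmented. At each internal node — a sum, a product, or an inversion of already augmented intermediates — the closure properties above give that the result is augmented as well. Hence $v(\bar{X})$ is augmented whenever $\bar{X}$ is.

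Finally, I would invoke the invariance of a linear subspace under an ODE whose velocity field takes values in that subspace. Concretely, writing $\Pi$ for the linear projection onto $\mathcal{S}^{\perp}$ (the subspace of matrices whose ``frame'' entries — the last row and last column excluding the bottom right — are arbitrary, with zeros elsewhere), we have $\Pi\bar{X}(0)=0$ by hypothesis, and $\frac{d}{dt}\Pi\bar{X}(t) = \Pi v(\bar{X}(t)) = 0$ as long as $\bar{X}(t)\in\mathcal{S}$. A Picard–Lindelöf/uniqueness argument then shows $\Pi\bar{X}(t)\equiv 0$ throughout the existence interval, i.e.\ $\bar{X}(t)$ stays augmented. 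The only delicate point is handling inversions, which require $\bar{X}(t)$ to remain nonsingular along the trajectory so that $v$ is well defined; but this is exactly the regime in which the dynamic is considered, and (for the two ansatzes of interest) is guaranteed by the soundness results already proved. The main obstacle is therefore bookkeeping rather than analysis: spelling out that every operation appearing in the explicit formulas~\eqref{eq:1st-ansatz-simplified-update} and~\eqref{eq:2nd-ansatz-simplified-update} (and in the construction of $L$, $p$, and $G$) is built from sums, products, and inversions of augmented symmetric matrices, which is why the authors remark that the proof is tedious verification.
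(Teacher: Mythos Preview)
Your proposal is correct and matches the paper's intended approach: the authors explicitly omit the proof, stating it ``boils down to tedious verification and offers no insights,'' which is precisely the block-closure plus ODE-invariance argument you spell out. Your write-up in fact supplies the details the paper skips.
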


By taking into account the simplified formulas \eqref{eq:1st-ansatz-simplified-update} and \eqref{eq:2nd-ansatz-simplified-update}, both of these dynamic comply with the dynamics described in Lemma \ref{lem:1-dynamic-x-stays-augmented} in the augmented case. Therefore, we can re-write the dynamic in the augmented form as below:
\[
	\bar{X}(t) = \begin{pmatrix}
	\Tilde{X}(t) & 0\\
	0 & \beta(t)
	\end{pmatrix}, \bar{\dot{X}}(t) =
	\begin{pmatrix}
	\dot{\Tilde{X}}(t) & 0\\
	0 & b^T p -  \beta(t)
	\end{pmatrix}
\]
Each feasible solution of the original problem maps to a feasible solution $\bar{X}$ of the augmented problem 
by adding a row and a column with all zeros. The objective value of such an $\bar{X}$ is then scaled by $\gamma$, 
i.e., $tr( \bar{C} \bar{X} ) = \gamma tr( C X )$. By an appropriate choice of $\gamma$, these objective values can 
be made smaller than the objective value of $\bar{C}^{-1}$, which is a constant. Hence we have the following:
\begin{equation} \label{eq:side_1_augmented}
tr(\bar{C}\bar{X}_{opt}) \le \gamma tr(C X_{opt})
\end{equation}

If the dynamic converges to the optimum of the 
augmented SDP, and $\beta(t)$ converges to zero, then at sufficiently large time $T>0$,
\begin{align*}
	\gamma tr(CX_{opt}) &\geq tr(\bar{C}\bar{X}_{opt})\\
		& = tr(\bar{C}\bar{X}) -\epsilon\gamma\\
		& = \gamma tr(\Tilde{X}) +\beta(T) - \epsilon\gamma\\
		&\geq \gamma tr(\Tilde{X})- \epsilon\gamma
\end{align*}
Therefore, the approximation $\Tilde{X}$ has objective at most $\epsilon$
larger than the optimum objective value. Note that $\Tilde{X}$ is almost feasible
where the error is proportional to $\beta(t)$.
So with the augmentation, we are approximating the original problem, 
both in terms of optimality and feasibility. We will also provide experimental evidence in Section \ref{sec:experiment} that shows $\beta(t)$ approaches zero for \dynamicOne{} where we need augmentation. Note that $\bar{X}(0) = \bar{C}^{-1}$ complies with the condition in Conjecture \ref{conj:general-dynamic-convergence} as it is equal (therefore dominates) a feasible solution: $\bar{C}^{-1}$. Furthermore, this also provides evidence to back up the optimality convergence statement in Conjecture \ref{conj:general-dynamic-convergence}.

\subsection{Discretization and Numerical Algorithms} 

In this section, we investigate the discretization of the \dynamicSDP{} and introduce a framework to obtain a primal dual solver for SDP problems. The algorithms and experimental evaluations in this section complement our conjectures about the soundness and convergence to optimality of the \dynamicOne.
Moreover, we believe that these nature-inspired numerical algorithms have strong potentials in its practicality.

To simulate the continuous dynamic in an algorithm, we will discretize it using the 
following~\cite{butcher2016numerical}:
\begin{equation}
	\label{eq:discretization}
	X(t+1) \gets X(t) + h \dot{X}(t)
\end{equation}
Intuitively, this equation simulates the \dynamicSDP{} when $h \to 0$. As a consequence, for practical purpose, $h$ should be chosen small enough.

\paragraph{Vanilla algorithm} The vanilla algorithm is a straightforwad implementation of 
the discretization scheme~\eqref{eq:discretization}. The algorithm consists of a sequence of 
iterations where in each iteration we solve for $\dot{X}(t)$ and update $X(t)$ with the discretized 
dynamic and step $h$ to obtain $X(t+1)$. The process continues until $\dot{X}(t)$ becomes too
small which indicates reaching an equilibrium point. We stop the iteration whenever 
$\norm{\dot{X}}$ is smaller than some small constant $\epsilon$. Thus we obtain the following algorithm:

\begin{algorithm}[H]
	\caption{Vanilla Physarum SDP solver}	
	\label{alg:vanilla-SDP}
	\SetKwInOut{Input}{Input}\SetKwInOut{Output}{Output}
	\Input{$C, {A}_1,\ldots,{A}_m \in S_n$, $b \in \mathbb{R}^m, X(0)$.}
	\Output{$(X^{eq} \succeq 0, p^{eq})$.}
	\BlankLine
	Let $t =0$\;
	\Repeat{$\norm{\dot{X}(t)} \leq \epsilon$}{
	$\dot{X}(t), p(t) \gets \textsc{SolveUpdateProblem}(C, A_1, ..., A_m, b, X(t))$ using 
		Algorithm~\ref{alg:Solve-Update-Problem}\;
	Calculate small enough $h$\;
	Update $X(t+1) \gets X(t) + h \dot{X}(t)$\;
	Increment $t$\;
	}
	\Return $(X(t), p(t))$
\end{algorithm}

Each iteration solves an update problem (or equivalently, a projection under weighted 
inner product). We give below the update problem solver in terms of $\dot{X}$ and $p$; moreover, we use the simplified formulas in \eqref{eq:1st-ansatz-simplified-update} to implement the update problem.

\begin{algorithm}[H]
	\caption{Solve Update Problem}	
	\label{alg:Solve-Update-Problem}
	\SetKwInOut{Input}{Input}\SetKwInOut{Output}{Output}
	\Input{$C, {A}_1,\ldots,{A}_m \in S_n$, $b \in \mathbb{R}^m, X$.}
	\Output{$\dot{X}, p$.}
	\BlankLine
	Calculate the $m \times m$ matrix $L$ and let $L_{i,j} \gets \vectorize{A_i}^TG\vectorize{A_j}$\;
	Calculate $p \gets L^\dagger b$\;
	Calculate $\dot{X}$ s.t. $\vectorize{\dot{X}} \gets  G \capitalize{A}^T p - GC$\;
	\Return $(\dot{X}, p)$
\end{algorithm}

As mentioned before, $h$ should be chosen small enough. One obvious upper-bound for $h$ 
is that the update should preserve positive semi-definiteness of $X(t)$. There is 
a close form bound that guarantees positive semi-definiteness, or one can use exponential/binary
search. We defer interested reader 
to our \hyperlink{anchor:solverLink}{implementation}.

In our experiments, we run Algorithm~\ref{alg:vanilla-SDP} on the augmented problems 
and map the augmented solution back to the original problem (recall that by 
Lemma~\ref{lem:1-dynamic-x-stays-augmented} the iterates stays augmented). We provide empirical evidence that $\beta(t)$ converges to zero and that $\Tilde{X}$ which was obtained from taking the upper-left $n \times n$ matrix of $X$ yields a sound and optimal solution to our SDP.

\paragraph{Modified algorithm} One of the problems encountered by Algorithm~\ref{alg:vanilla-SDP}
is that $L=\capitalize{A}G\capitalize{A}^T$ becomes ill-conditioned as some eigenvalues of $X$, and in turn, $G$ converge to zero while reaching an equilibrium point which is not full-rank. To circumvent this problem, we introduce the modified algorithm that facilitate numerical hacks to work with near-zero eigenvalues.

For the modified algorithm, we drop the condition on $X(0)$ being linearly feasible and we start with a large enough matrix $X(0) = \eta \times I$ where $\eta$ is set to a large value such that $X(0)$ dominates an arbitrary feasible solution; this complies with the statement in Conjecture \ref{conj:general-dynamic-convergence} where we consider non-feasible starting points. As a result, we do not augment the problem, and interestingly enough, the dynamic still converges to the optimum in our experiments. We introduce a set of ``epochs'' in our algorithm. Each epoch ends whenever either $\norm{\dot{X}(t)}$  becomes less than a certain $\epsilon$ (i.e, the dynamic converges) or the minimum eigenvalue of $X(t)$ becomes smaller than $\epsilon$. In the former case, we have reached equilibrium, but in the latter, we need to restart the dynamic and start another epoch.

We keep track of a set of linearly independent vectors, which we call a ``diagonalization basis'', that can simultaneously diagonalize $X(t)$ and $C$ at any time $t$. The diagonalization basis will contain $n$ vectors in the beginning since the initialization sees $X(0)$ as being full-rank. Assume that at a certain arbitrary epoch, $X(0)$ has rank $k$. Let us identify the diagonalization basis with $\{\Tilde{u}_1, \Tilde{u}_2, ..., \Tilde{u}_k\}$ and obtain the $n \times k$ matrix $\Tilde{U}$ by stacking these vectors in columns next to each other. $\Tilde{U}$ is a time-invariant matrix that diagonalizes $C$ and $X(t)$ at any time $t$; in other words, both $\Tilde{U}^T C \Tilde{U}$ and $\Tilde{U} X(t) \Tilde{U}$ are diagonal. The detail on why this matrix $\Tilde{U}$ exists and how to obtain it is spared and we defer readers to take a look into our \hyperlink{anchor:solverLink}{implementation} which calculates these matrices using a simultaneous diagonalization technique on $C$ and $X(0)$ based on generalized eigenvalues~\cite{ghojogh2022eigenvalue} at the beginning of each epoch.

With that in mind, when $X(t)$ obtains a near zero eigenvalue, the diagonal matrix $\Tilde{U}^T X(t) \Tilde{U}$ will obtain a near zero element $\lambda_i$ on the diagonal. In that case, we pop $\Tilde{u}_i$ from the diagonalization basis and start the new epoch.
We define a projection of an arbitrary matrix $M$ on the set of diagonalization basis vectors $\{\Tilde{u}_1, ..., \Tilde{u}_k\}$ as the solution to the following equation:

$$projection_{\Tilde{U}}(M) = \sum_{i = 1}^k c_i \Tilde{u}_i \Tilde{u}_i^T \text{ s.t } \{c_1, ..., c_m\} =  \underset{ \{z_1, ..., z_m\}}{\arg\min} \norm{M - \sum_{i=1}^k z_i \Tilde{u}_i\Tilde{u}_i^T}_2^2$$

Intuitively speaking, the $projection$ function considers the best approximation of a matrix given a certain set of vectors. When $\Tilde{u}_i^T X(t) \Tilde{u}_i$ converges to zero, we may assume that $\Tilde{u}_i$ is in the kernel of any optimum $X$; therefore, projecting all of the dynamic into a new basis by excluding $\Tilde{u}_i$ will help avoid dealing with near zero eigenvalues in $X(t)$. As a result, we can work around numerical instability by calculating a projected $G$ matrix in the update problem phase which will not be ill-conditioned. 

The modified algorithm contains a lot of technical details embedded in our implementation; however, for simplicity, we do not expand on the theoretical foundations of these technical details. A simplified version of the algorithm using the projection intuition can be summarized in Algorithm~\ref{alg:modified-SDP}. 
Additionally, we use the modified algorithm as strong evidence to back up Conjecture \ref{conj:general-dynamic-convergence}. Note that even though we do not start from a feasible solution, we converge to an optimum solution when we set $X(0)$ to be a large enough matrix.

\begin{algorithm}
	\caption{Modified Physarum SDP solver}	
	\label{alg:modified-SDP}
	\SetKwInOut{Input}{Input}\SetKwInOut{Output}{Output}
	\Input{$C, {A}_1,\ldots,{A}_m \in S_n$, $b \in \mathbb{R}^m$, $\eta$.}
	\Output{$(X^{eq} \succeq 0, p^{eq})$.}
	\BlankLine
	Set $X(0) = \eta \times I$\;
	Initialize the diagonalization basis in the $n\times n$ matrix $\Tilde{U}$ s.t. $\Tilde{U}^T X(t) \Tilde{U}$ and $\Tilde{U}^T C \Tilde{U}$ remain diagonal throughout the dynamic\;
	
	\Repeat{$\norm{\dot{X}(t)} \le \epsilon$ and $\lambda_{\min}(projection(X(t))) \ge \epsilon$}{
	Pop any column in $\Tilde{U}$ which corresponds to diagonal elements less than $\epsilon$ in $X(0)$\;
	Set $X(0) \gets \eta \times projection_{\Tilde{U}}(X(0))$\;
	Set $C \gets projection_{\Tilde{U}}(C)$\;
	Let $t \gets0$\;
	\Repeat{$\norm{\dot{X}(t)} \le \epsilon$ or $\lambda_{\min}(X(t)) \le \epsilon$}{
	Let $\dot{X}(t), p(t) \gets \textsc{SolveUpdateProblem}(C, A_1, ..., A_m, b, X(t))$ using Algorithm \ref{alg:Solve-Update-Problem}\;
	Calculate small enough $h$\;
	Update $X(t+1) \gets X(t) + h \dot{X}(t)$\;
	Increment $t$\;
	}
	}
	\Return $(X(t), p(t))$
\end{algorithm}

\subsection{Experimental Evaluation}
\label{sec:experiment}
In this section we discuss the dataset used to evaluate our algorithms and 
their implementations. These experimental results supports our Conjecture~\ref{conj:general-dynamic-convergence}.
For validation, our algorithms are compared with a standard 
SDPASolver\footnote{\url{http://sdpa.sourceforge.net/}} as ground truth.

We evaluate the objective value obtained from our dynamic using a metric identified as ``gap" which calculates the absolute difference between the ground truth primal solution $tr(CX^*)$ and the objective value obtained from our dynamic: $tr(CX(\infty))$. We also evaluate the soundness of our algorithm according to a real number identified as ``infeasibility". The infeasibility metric is calculated as below:
$$infeasibility(X) = \max \left( \underset{\ell \in [m]}{\max} |b_\ell - tr(A_\ell X)|, \max(0, -\lambda_{\min}(X)) \right)$$

For a sound dynamic, infeasibility should converge to zero and any deviation from zero impedes the soundness of the dynamic.

\paragraph{The dataset} Table \ref{table:dataset} shows a set of $180$ positive SDP samples. 
The tests are generated using three different schemes on different matrix size initialization. 
Details as follows:

\begin{itemize}
    \item[-] \textbf{Random-Tests:} These tests are generated randomly. 
		Given a certain $n$ and $m$ - the size of matrices and the number of 
		conditions - a random positive definite matrix $C$ and random symmetric 
		matrices $A_\ell$ are generated. The datasets which are created using this scheme 
		consist of ``testset1" to ``testset3", ``large1", and ``large2".
    \item[-] \textbf{Vertex-Cover:} To generate these data we create a random graph with $n$ 
		vertices numbered from $1$ to $n$ and $m$ edges. We will then create an SDP which is able to 
		approximate the minimum vertex cover problem. For each random graph with $n$ vertices, $C$ is the identity matrix of size $(n + 1) \times (n+1)$. 
		Then the following linear condition matrices are generated:
		\begin{itemize}
			\item For each vertex $v$ in the graph an $A_\ell$ is created according to the following:
			$$(A_\ell)_{i,j} = \begin{cases}
			-1 & \text{if } \min(i,j) = 1 \text{ and } \max(i,j) = v + 1\\
			2 & \text{if } i = j = v + 1\\
			0 & \text{otherwise}
			\end{cases}$$
			and the respective $b_\ell$ is set to $0$.
			\item For each edge $(v, w)$ in the graph an additional $A_\ell$ is created according to the following:
			$$(A_\ell)_{i,j} = \begin{cases}
			-1 & \text{if } \min(i,j) = \min(v + 1, w + 1) \text{ and } \max(i, j) = \max(v + 1, w + 1)\\
			1 & \text{if } \min(i,j) = 1 \text{ and } \max(i,j) \in \{v + 1, w + 1\}\\
			0 & \text{otherwise}
			\end{cases}$$
			and the respective $b_\ell$ is set to $2$.
			\item An additional $A_\ell$ is created which is set to zero except one element 
				on the upper-left corner and the respective $b_\ell$ is $1$.
		\end{itemize}
    These tests have $n + m + 1$ conditions and matrices of size $n + 1$.

    \item[-] \textbf{Max-Cut:} We have created two sets of Max-Cut problems referred to as `maxcut1' and `maxcut2'. 
	To generate them, we create a random graph with $n$ vertices. An approximation of the maximum cut of the created 
	graph can be computed using the following SDP problem 
    $$\arg\max\{tr(W X) ~:~ X \succeq 0, \text{Main diagonal entries of } X \text{ are all 1}\}, $$
    where $W$ is the Laplacian matrix of the graph. To satisfy the constraints of our algorithm, 
	we transform the aforementioned to
    $$\arg\min\{ tr\left(\left[\xi I - W\right] X\right)~:~ X \succeq 0, 
		\text{Main diagonal entries of } X \text{ are all 1}\},$$
    where $\xi$ is big enough so that $\left[\xi I - W \right]\succ 0$.
	Note that as all main diagonal entries of $X$ are required to be $1$ which correspond to $A_\ell$ 
	matrices that are set to be entirely zero except the entry on $(i,i)$. 
	That said, these tests have $n$ linear constraints and matrices of size $n$.
\end{itemize}

\begin{table}[ht]
	\centering
	\begin{tabular}{|c|c|c|}
	\hline
	TestSet Description                                                                                                           & TestSet Name   & No. of tests \\ \hline
	\begin{tabular}[c]{@{}c@{}}Randomly generated definite SDP samples\\ with $n=5, m \in [3] $\end{tabular}                     & testset1       & 20           \\ \hline
	\multirow{2}{*}{\begin{tabular}[c]{@{}c@{}}Randomly generated definite SDP samples\\ with $n=10, m \in [5]$\end{tabular}}   & testset2-1     & 14           \\ \cline{2-3} 
																																  & testset2-2     & 16           \\ \hline
	\multirow{2}{*}{\begin{tabular}[c]{@{}c@{}}Randomly generated definite SDP samples\\ with $n=25, m \in [10]$\end{tabular}} & testset3-1     & 16           \\ \cline{2-3} 
																																  & testset3-2     & 4           \\ \hline
	\multirow{2}{*}{\begin{tabular}[c]{@{}c@{}}Randomly generated vertex cover problems\\ with $|V(G)| = 5, |E(G)|\in [10]$\end{tabular}}       & vertexcover1-1 & 13             \\ \cline{2-3} 
																																  & vertexcover1-2 &    7          \\ \hline
	\multirow{2}{*}{\begin{tabular}[c]{@{}c@{}}Randomly generated vertex cover problems\\ with $|V(G)| = 20, |E(G)|\in [130]$\end{tabular}}        & vertexcover2-1 &   4           \\ \cline{2-3} 
																																  & vertexcover2-2 &    16          \\ \hline
	\begin{tabular}[c]{@{}c@{}}Randomly generated definite SDP samples\\ with $n=50, m \in [5, 10]$\end{tabular}                  & large1         & 10           \\ \hline
	\begin{tabular}[c]{@{}c@{}}Randomly generated definite SDP samples\\ with $n=100, m \in [5, 20]$\end{tabular}                 & large2         & 20           \\ \hline
	\begin{tabular}[c]{@{}c@{}}Randomly generated vertex cover problems\\ with $|V(G)| = 50, |E(G)| \in [10, 20]$\end{tabular}    & vertexcover3   & 10           \\ \hline
	\begin{tabular}[c]{@{}c@{}}Randomly generated max cut problems\\ with $|V(G)| \in [20, 50]$ \end{tabular}    & maxcut1   & 20           \\ \hline
	\begin{tabular}[c]{@{}c@{}}Randomly generated vertex cover problems\\ with $|V(G)| = 100$\end{tabular}    & maxcut2   & 10           \\ \hline
	\end{tabular}
	\caption{Standard SDPLib dataset that is generated to validate the Physarum dynamics.}
	\label{table:dataset}
\end{table}
	
\paragraph{Experimental results} Now we explain our empirical results in details,
thus providing evidence that \dynamicOne{} is sound and converges to 
optimum, in the augmented case and unconditionally (see Conjecture~\ref{conj:general-dynamic-convergence}).
We discuss the performances of the Algorithm~\ref{alg:vanilla-SDP} for the augmented case and Algorithm~\ref{alg:modified-SDP} for the general case. We extract the maximum $gap$ and $infeasibility$ for each of our test sets and will also provide empirical evidence that $\beta(t)$ converges to zero in the augmented case for Algorithm~\ref{alg:vanilla-SDP}.

\paragraph{Performance of the vanilla algorithm} 
Table~\ref{tab:vanilla-augmented} contains the results of Algorithm~\ref{alg:vanilla-SDP} on each of the datasets.
Each row contains one of the datasets in Table~\ref{table:dataset}. The solver 
provides accurate solutions on most of the datasets. In the ``vertexcover1" and ``vertexcover2" dataset 
however we have an increase in the number of linear conditions which makes the matrix $L$ larger; therefore, making the update step more difficult. 
This can in turn make the algorithm both slower and more numerically unstable. 
In addition to that, the solver runs into numerical difficulties in these tests. 
We believe that there are two main causes:
\begin{itemize}
    \item $h$ is not chosen sufficiently small. We have only used a naive approach to choose $h$ and by 
		tuning it we might get more accurate results.
    \item $L$ becomes ill-conditioned due to some of $\lambda_i(t)$ values vanishing. 
\end{itemize}
The latter is addressed by the modified algorithm~\ref{alg:modified-SDP}.

The vanilla algorithm starts with a linearly feasible $X(0)$ by augmenting the problem. We have presented the maximum $\beta$ values in the last iteration in the table. It is worthwhile mentioning that we did not augment 
the problem in ``maxcut1" and ``maxcut2" 
datasets as the identity matrix is a feasible solution for these tests; therefore, by setting $X(0) = I$ we obtain the linear feasibility condition which was the main motive for our augmentation technique. On the other hand, for the rest of the dataset, 
$\beta$ values converged to zero closely, indicating that we manage to 
approach linear feasibility in the original problem.

\begin{table}[ht] 
	\begin{footnotesize}
	\begin{tabular}{|c|c|c|c|c|c|}
	\hline
	TestSet        & \begin{tabular}[c]{@{}c@{}}Ratio of tests\\ with error \\ below $10^{-2}$\end{tabular} & \begin{tabular}[c]{@{}c@{}}Average Time\\  Spent\end{tabular} & \begin{tabular}[c]{@{}c@{}}Maximum primal gap\\ on tests with a lower \\ than $10^{-2}$ primal gap\end{tabular} & \begin{tabular}[c]{@{}c@{}}Maximum \\ Infeasibility\end{tabular} & \begin{tabular}[c]{@{}c@{}}Maximum $\beta$\\ in last iteration for \\ high accuracy tests\end{tabular} \\ \hline
	testset1       & 20/20                                                                                 & 0.635 (sec)                                                   & $2.7 \times 10^{-8}$                                                                                           & $7.7 \times 10^{-12}$                                            & $7.4 \times 10^{-54}$                                                                                  \\ \hline
	testset2-1     & 14/14                                                                                 & 0.335 (sec)                                                   & $9.5 \times 10^{-7}$                                                                                           & $7.9 \times 10^{-12}$                                            & $9.5 \times 10^{-7}$                                                                                   \\ \hline
	testset2-2     & 6/6                                                                                   & 1 (sec)                                                       & $1.3 \times 10^{-6}$                                                                                           & $1.2 \times 10^{-11}$                                            & $3.8 \times 10^{-114}$                                                                                 \\ \hline
	testset3-1     & 14/16                                                                                 & 1.26 (sec)                                                    & $7.9 \times 10^{-7}$                                                                                           & $1.8 \times 10^{-11}$                                            & $9.3 \times 10^{-30}$                                                                                  \\ \hline
	testset3-2     & 4/4                                                                                   & 4.3 (sec)                                                     & $4.7 \times 10^{-7}$                                                                                           & $1.8 \times 10^{-11}$                                            & $7.3 \times 10^{-189}$                                                                                 \\ \hline
	vertexcover1-1 & 13/13                                                                                 & 1.4 (sec)                                                     & $7 \times 10^{-4}$                                                                                             & $0$                                                              & $7.4 \times 10^{-5}$                                                                                   \\ \hline
	vertexcover1-2 & 5/7                                                                                   & 1.8 (sec)                                                     & $10^{-3}$                                                                                                      & $7.2 \times 10^{-5}$                                             & $10^{-3}$                                                                                              \\ \hline
	large1         & 10/10                                                                                 & 7.1 (sec)                                                     & $1.2 \times 10^{-8}$                                                                                           & $3.2 \times 10^{-11}$                                            & $7 \times 10^{-54}$                                                                                    \\ \hline
	large2         & 20/20                                                                                 & 25.8 (sec)                                                    & $1.4 \times 10^{-6}$                                                                                           & $6 \times 10^{-11}$                                              & $1.3 \times 10^{-6}$                                                                                   \\ \hline
	vertexcover2-1 & 1/4                                                                                   & 60 (sec)                                                      & $2.6 \times 10^{-9}$                                                                                           & $0$                                                              & $1.0 \times 10^{-7}$                                                                                   \\ \hline
	vertexcover3   & 4/13                                                                                  & 110 (sec)                                                     & $4.8 \times 10^{-5}$                                                                                           & $0$                                                              & $2.2 \times 10^{-3}$                                                                                   \\ \hline
	vertexcover2-2 & 2/16                                                                                  & 196 (sec)                                                     & $3.3 \times 10^{-9}$                                                                                           & $0$                                                              & $2.6 \times 10^{-3}$                                                                                   \\ \hline
	maxcut1 & 20/20                                                                                  & 13.4 (sec)                                                     & $1.04 \times 10^{-5}$                                                                                           & $0$                                                              & $-$                                                                                   \\ \hline
	maxcut2 & 10/10                                                                                  & 155 (sec)                                                     & $2.03 \times 10^{-5}$                                                                                           & $0$                                                              & $-$                                                                                   \\ \hline
	\end{tabular}
	\end{footnotesize}
	\caption{Results of the vanilla algorithm after augmenting the matrices with $\gamma = 0.01$. A test is considered accepted if the primal gap between the Physarum algorithm primal objective value and the ground truth is less than $10^{-2}$.}
	\label{tab:vanilla-augmented}
\end{table}

\paragraph{Performance of the modified algorithm}
We test the performance of Algorithm~\ref{alg:modified-SDP} without augmentation.
Table~\ref{table:results-modified} shows the results for the modified dynamics on the datasets. 
This solver is able to solve all of the tests in the dataset with descent accuracy, thus 
providing empirical support for our conjecture on the unconditional convergence 
of \dynamicOne.
However, it takes a longer time for the dynamic to converge accurately. 
We believe by a more appropriate choice of $h$ one can also obtain a fast and more accurate Physarum solver.

Note that although we have claimed that the accuracy of this algorithm is better than Vanilla, 
Vanilla works better on the maxcut tests which make it a really good candidate to tackle these problems. 
In addition, ``maxcut1" and ``maxcut2" datasets contain relatively large matrices which is the reason 
why in some tests in order to get the desired accuracy the algorithm took approximately $38$ minutes 
to become decently accurate.

\begin{table}[htb]
	\centering
	\begin{tabular}{|c|c|c|c|}
	\hline
	Testset Name   & \begin{tabular}[c]{@{}c@{}}Average\\ Time Spent\end{tabular} & Maximum Primal Gap    & Maximum Infeasibility \\ \hline
	testset1       & 7.5 (sec)                                                    & $3.6 \times 10^{-5}$  & $5.9 \times 10^{-12}$ \\ \hline
	testset2-1     & 2.7 (sec)                                                    & $2.9 \times 10^{-6}$  & $4.8 \times 10^{-6}$  \\ \hline
	testset2-2     & 65 (sec)                                                     & $2.48 \times 10^{-5}$ & $1.1 \times 10^{-13}$ \\ \hline
	testset3-1     & 52 (sec)                                                     & $1.9 \times 10^{-7}$  & $5.1 \times 10^{-6}$  \\ \hline
	testset3-2     & 115 (sec)                                                    & $8 \times 10^{-4}$    & $5 \times 10^{-7}$    \\ \hline
	vertexcover1-1 & 9 (sec)                                                      & $1.3 \times 10^{-6}$  & $2.4 \times 10^{-7}$  \\ \hline
	vertexcover1-2 & 41 (sec)                                                     & $5.9 \times 10^{-4}$  & $2.6 \times 10^{-4}$  \\ \hline
	vertexcover2-1 & 8 (min)                                                      & $4 \times 10^{-6}$    & $3.88 \times 10^{-6}$ \\ \hline
	vertexcover2-2 & 22 (min)                                                     & $1.8 \times 10^{-2}$  & $3.3 \times 10^{-5}$  \\ \hline
	large1         & 32 (sec)                                                     & $1.02 \times 10^{-4}$ & $5.9 \times 10^{-7}$  \\ \hline
	large2         & 170 (sec)                                                    & $7.2 \times 10^{-4}$  & $5.5 \times 10^{-11}$ \\ \hline
	vertexcover3   & 7 (min)                                                      & $1.7 \times 10^{-4}$  & $1.8 \times 10^{-4}$  \\ \hline
	maxcut1   & 3.5 (min)                                                      & $9.4 \times 10^{-3}$  & $1.2 \times 10^{-9}$  \\ \hline
	maxcut2   & 38 (min)                                                      & $9.1 \times 10^{-3}$  & $1.2 \times 10^{-9}$  \\ \hline
	\end{tabular}
	\caption{Results of the modified Algorithm \ref{alg:modified-SDP} on the dataset. Maximum Primal Gap is the difference between the Physarum SDP solver and the SDPA baseline. Infeasibility measures how much $X$ conflicts with the constraints. It is calculated by taking the maximum magnitude of the negative eigenvalue of $X^{eq}$ and the maximum $|b_\ell - tr(A_\ell X)|$ for $1 \le \ell \le m$.}
	\label{table:results-modified}
\end{table}

\clearpage

\bibliographystyle{ieeetr}
\bibliography{references}

\end{document}